 %%%%%%%%%%%%%%%%%%%%%%%%%%%%%%%%%%%%%%%%%%%%%%%%%%%%%%%%%%%
%qui sotto le abbreviazioni da usare per i teoremi e le
%definizioni%%%%%%%%%%%%%%%%%%%%%%%%%%%%%%%%%%%%%%%%%%%%%%

% Theorem -> thm,theorem
% Corollary -> corollary
% Lemma -> lemma
% Claim -> claim
% Axiom -> axiom
% Conjecture -> conjecture
% Fact -> fact
% Hypothesis -> hypothesis
% Assumption ->assumption
% Proposition -> proposition
% Criterion -> criterion
% Definition -> definition
% Example -> example
% Remark -> remark
% Problem -> problem
% Principle -> principle

% @unpublished{CGC-cry-talk-bellini2015,
% 	author = "E.~Bellini and T.~Mora and M.~Sala",
% 	location = "Trento",
% 	note = "Talk at MEGA 2015",
% 	title = "{Algorithmic approach using polynomial systems for the nonlinearity of Boolean functions}",
% 	url = "http://mega2015.science.unitn.it/",
% 	year = "2015"
% }

%%%%%%%%%%%%%%%%%%%%%%%%%%%%%%%%%%%%%%%%%%%%%%%%%%%%%%%%%%%%%
%%%%%%%%%%%%%%%%%%%%%%%%%%%%%%%%%%%%%%%%%%%%%%%%%%%%%%%%%%%%%
 
% \documentclass[11pt]{CGC2}
\documentclass[a4paper,11pt,english,oneside]{article}

\bibliographystyle{amsalpha}

\usepackage[english]{babel}
%-----------------------------------------------------------------------
\usepackage[latin1]{inputenc}
\usepackage[T1]{fontenc}
\usepackage{amsmath}
\usepackage{amsfonts}
\usepackage{amssymb}
\usepackage{verbatim}
\usepackage{rotating}
\usepackage{textcomp}
\usepackage[shortlabels]{enumitem}

% ---(Colors)-------------------------------------------------------------
%\usepackage{color}
\usepackage{colortbl}
\usepackage{xcolor}
% ---(Images)-------------------------------------------------------------
\usepackage{graphicx}
\usepackage{fancyhdr}
\usepackage{algorithm} % or \usepackage[boxed]{algorithm}
\usepackage{algorithmic}
\usepackage{amsthm}
\usepackage{indentfirst}

% for latex draw code
%\usepackage[usenames,dvipsnames]{pstricks}
\usepackage{epsfig}
%--(Tables)--------------------------- n--------------------------------------
%\usepackage{supertabular}
%---------------------------------------------------------------------------
\usepackage[all]{xy} % for xymatrix
\usepackage{listings} % for magma code
\usepackage{color}
\definecolor{mygreen}{rgb}{0,0.6,0}
\definecolor{mygray}{rgb}{0.5,0.5,0.5}
\definecolor{mymauve}{rgb}{0.58,0,0.82}
\lstset{ % 
  basicstyle=\linespread{0.9}      % linespacing
             \footnotesize\ttfamily,% the size of the fonts that are used for the code
  %aboveskip={-4ex},
  backgroundcolor=\color{white},   % choose the background color; you must add \usepackage{color} or \usepackage{xcolor}
  basewidth=0.5em,                 %space between letters
  breakatwhitespace=true,          % sets if automatic breaks should only happen at whitespace
  breaklines=true,                 % sets automatic line breaking
  captionpos=b,                    % sets the caption-position to bottom
  commentstyle=\color{mygreen},    % comment style
  deletekeywords={ and, or, not},  % if you want to delete keywords from the given language
  escapeinside={\%*}{*)},          % if you want to add LaTeX within your code
  extendedchars=true,              % lets you use non-ASCII characters; for 8-bits encodings only, does not work with UTF-8
  frame=single,                    % adds a frame around the code
  keepspaces=true,                 % keeps spaces in text, useful for keeping indentation of code (possibly needs columns=flexible)
  keywordstyle=\color{blue},       % keyword style
  language=C++,                    % the language of the code
  morekeywords={ end, function, local, then, elif},            % if you want to add more keywords to the set
  numbers=left,                    % where to put the line-numbers; possible values are (none, left, right)
  numbersep=5pt,                   % how far the line-numbers are from the code
  numberstyle=\tiny\color{mygray}, % the style that is used for the line-numbers
  rulecolor=\color{black},         % if not set, the frame-color may be changed on line-breaks within not-black text (e.g. comments (green here))
  showspaces=false,                % show spaces everywhere adding particular underscores; it overrides 'showstringspaces'
  showstringspaces=false,          % underline spaces within strings only
  showtabs=false,                  % show tabs within strings adding particular underscores
  stepnumber=1,                    % the step between two line-numbers. If it's 1, each line will be numbered
  stringstyle=\color{mymauve},     % string literal style
  tabsize=2,                       % sets default tabsize to 2 spaces
  title=\lstname                   % show the filename of files included with \lstinputlisting; also try caption instead of title
}

%%%%%%%%%%%%%%%%%%%%%%%%%%%%%%%%%%%%%%%%%%%%%%%%%%%%%%%%%%%%%%%%%%%%

%%%%%%%%%%%%%%%%%%%%%%%%%%%%%%%%%%%%%%%%%%%%%%%%%%%%%%%%%%%%%%%%%%%%
% from file newcommands.tex
% \input{newcommands}
%%%%%%%%%%%%%%%%%%%%%%%%%%%%%%%%%%%%%%%%%%%%%%%%%%%%%%%%%%%%%%%%%%%

%from Mora
\newcommand{\then}{\implies}
\newcommand{\Cal}{\cal}

%mathcal
\newcommand{\A}{{\mathcal A}}
\newcommand{\B}{{\mathcal B}}

\newcommand{\M}{\mathcal{M}}
\newcommand{\NLI}{{\mathcal N}}

\newcommand{\V}{{\mathcal V}}

%
%mathbb
%\newcommand{\A}{\mathbb{A}}
\newcommand{\CC}{\mathbb{C}}

\newcommand{\FF}{\mathbb{F}}
\newcommand{\KK}{\mathbb{K}}
\newcommand{\NN}{\mathbb{N}}

\newcommand{\QQ}{\mathbb{Q}}
\newcommand{\RR}{\mathbb{R}}
\newcommand{\ZZ}{\mathbb{Z}}
%
%mathrm

\newcommand{\dd}{\mathrm{d}}
\newcommand{\dist}{\mathrm{d}}

\newcommand{\N}{\mathrm{N}}

\newcommand{\NNF}{\mathrm{NNF}}

\newcommand{\Span}{\mathrm{span}}

\newcommand{\w}{\mathrm{w}}
%
%mathfrak

\newcommand{\nP}{{\mathfrak n}}

\newcommand{\tP}{{\mathfrak t}}
%
%\sf

\newcommand{\op}{{\sf p}} %ordered points
%
%others

% \newcommand{\ef}{\underline{f}}
% \newcommand{\eg}{\underline{g}}
% \newcommand{\ea}{\underline{\alpha}}
\newcommand{\Gr}{Gr\"obner}

 % binary weight ideal
 % integer weight ideal

%for algorithm

%from simonetti thesis
\newcommand{\ef}{\underline{f}}
\newcommand{\eg}{\underline{g}}

% \newcommand{\dist}{\mathrm{d}}

%%%%%%%%%%%%%%%%%%%%%%%%%%%%%%%%%%%%%%%%%%%%%%%%%%%%%%%%%%%%%%%%%%%%
%%%%%%%%%%%% ENVIRONMENTS %%%%%%%%%%%%%%%%%%%%%%%%%%%%%%%%%%

\theoremstyle{plain}
 \newtheorem{thm}{Theorem}[section]
 \newtheorem{theorem}[thm]{Theorem}
 
 \newtheorem{lemma}[thm]{Lemma}

 \newtheorem{fact}[thm]{Fact}

 \newtheorem{proposition}[thm]{Proposition}
 
 \newtheorem{definition}[thm]{Definition}

 \theoremstyle{definition}
 \newtheorem{example}[thm]{Example}
\theoremstyle{remark}
 \newtheorem{remark}[thm]{Remark}

%%%%%%%%%%%%%%%%%%%%%%%%%%%%%%%%%%%%%%%%%%%%%%%%%%%%%%%%%%%%%%%%%%%

% \makeindex
% \let\origdoublepage\cleardoublepage
% \newcommand{\clearemptydoublepage}{%
% \clearpage
% {\pagestyle{empty}\origdoublepage}%
% }
% \let\cleardoublepage\clearemptydoublepage

%%%%%%%%%%%%%%%%%%%%%%%%%%%%%%%%%%%%%%%%%%%%%%%%%%%%%%%%%%%%%%%%%%%%%%%%%%%%%%%%%%%%%%

% to have inline formulas not broken in two lines
% \binoppenalty=\maxdimen
% \relpenalty=\maxdimen

\begin{document}

\title{
An algorithmic approach using multivariate polynomials for the nonlinearity of Boolean functions
}
\author{
E.~Bellini \thanks{eemanuele.bellini@gmail.com}
\quad
T.~Mora \thanks{theomora@disi.unige.it}
\quad
M. Sala \thanks{maxsalacodes@gmail.com}\\
}
\date{}

%%%%%%%%%%%%%%%%%%%%%%%%%%%%%%%%%%%%%%%%%%%%%%%%%%%%%%%%%%%%%%%%%%%%%
\maketitle
%%%%%%%%%%%%%%%%%%%%%%%%%%%%%%%%%%%%%%%%%%%%%%%%%%%%%%%%%%%%%%%%%%%%%
\begin{abstract}
% The nonlinearity of a Boolean function is a key property in deciding its suitability
% for cryptographic purposes, e.g. as a combining function in stream ciphers, and so
% its computation is an important problem for applications. Traditional methods to
% compute the nonlinearity are based on transforms, such as the Fast Walsh Transform.
% In 2007 a method was proposed by Simonetti that was based on an interpretation of the
% above problem as a decision problem on the existence of solutions for some
% multivariate polynomial systems with Boolean variables. Although we do not dispute
% the theoretical optimality of her polynomial interpretation given the problem
% constraints, the algorithm she proposed utilized Groebner bases and was impractical.
% We now propose in this paper two more practical approaches, one that determines
% the existence of solutions for Simonetti's systems in an efficient way and another
% that writes similar systems but over fields with a different characteristics.
The nonlinearity of a Boolean function is a key property in deciding its suitability
for cryptographic purposes, e.g. as a combining function in stream ciphers, and so
the nonlinearity computation is an important problem for applications. 
Traditional methods to compute the nonlinearity are based on transforms, 
such as the Fast Walsh Transform.
In 2007 Simonetti proposed a method to solve the above problem seen 
as a decision problem on the existence of solutions for some
multivariate polynomial systems. 
Although novel as approach, her algorithm suffered from a direct application of \Gr\ bases 
and was thus impractical.
We now propose two more practical approaches, one that determines
the existence of solutions for Simonetti's systems in a faster way and another
that writes similar systems but over fields with a different characteristics.
For our algorithms we provide an efficient implementation in the software package MAGMA.
\end{abstract}

% \begin{keyword}
\begin{center}
{\footnotesize 
{\bf Keywords:} Boolean function, Cryptography, multivariate polynomials, \Gr\ bases
}
\end{center}

\section*{Introduction}
% Introduction
Any function from $(\FF_2)^n$ to $\FF_2$ is called a Boolean function. Boolean functions are important in symmetric cryptography, since they are used in the confusion layer of ciphers. 
An affine Boolean function does not provide an effective confusion. To overcome this, we need functions which are as far as possible from being an affine function. 
The effectiveness of these functions is measured by several parameters, one of these is called nonlinearity (\cite{CGC-cd-book-carlet}).
Traditional methods to
compute the nonlinearity are based on transforms, such as the Fast Walsh Transform.
In \cite{CGC-cd-art-ilawcc07} a method was proposed by Simonetti that was based on an interpretation of the
above problem as a decision problem on the existence of solutions for some
multivariate polynomial systems with Boolean variables. 
This was the first time that a classical problem in Boolean functions was tackled with multivariate polynomial techniques.
Although novel and interesting as approach, her algorithm suffered from a direct application of \Gr\ bases and was thus impractical.
In this paper we propose two more practical approaches, one that determines
the existence of solutions for Simonetti's systems in an efficient way and another
that writes similar systems but over fields with a different characteristics.\\
For our algorithms we provide an efficient implementation in the software package MAGMA,
which was presented at the conference MEGA 2015 \cite{CGC-cry-talk-bellini2015}.\\
Although the complexity of our method is still far from the best-known methods using Fast Walsh Transforms or similar techniques, 
we believe that our improvement on Simonetti's original idea is significant and that  there is still space for improvement on multivariate-polynomials methods to solve this problem, which might also lead to new insights.\\
The structure of this paper is as follows.
In Sections \ref{prel} we recall our preliminaries, especially regarding Boolean functions and related polynomials.
In Section \ref{secNLwithGBoverF2} we describe Simonetti's approach.
In Section \ref{traverso} we sketch a strategy to solve Simonetti's
systems which does not require the computation of a Groebner basis,
thanks to a 1992 idea by Traverso,
and we thus describe our first algorithm, Algorithm 1, which is a refined version of Simonetti's algorithm.
In Section  \ref{secNLwithGBoverQ} we associate to each Boolean function 
in $n$ variables a rational polynomial whose evaluations represent the distance from all possible affine functions,
obtaining our second algorithm, Algorithm 2.
In Section \ref{secNLwithFPE}, we use an alternative approach to compute the nonlinearity 
avoiding the hard task of solving a polynomial system of equations, which is described in Algorithm \ref{algNLfromNLP}.\\ 
In Section \ref{secNLComplexity} we provide some complexity considerations.
\section{Preliminaries}
\label{prel}
%%%%%%%%%%%%%%%%%%%%%%%%%%%%%%%%%%%%%%%%%%%%%%%%%%%%%%%%%%%%%%%%%%%%%%%%%%%%%%%%%%%%%%%%%%%%%%%%%%%%%%%%%%%%%%%
%%%%%%%%%%%%%%%%%%%%%%%%%%%%%%%%%%%%%%% ALREADY IN THE IN PRELIMINARIES %%%%%%%%%%%%%%%%%%%%%%%%%%%%%%%%%%%%%%%
%%%%%%%%%%%%%%%%%%%%%%%%%%%%%%%%%%%%%%%%%%%%%%%%%%%%%%%%%%%%%%%%%%%%%%%%%%%%%%%%%%%%%%%%%%%%%%%%%%%%%%%%%%%%%%%
\subsection{Nonlinearity of Boolean functions}
  \label{secPrelOnBF}

In this section we summarize some definitions and known results from \cite{CGC-cd-book-carlet} and \cite{CGC-cd-book-macwilliamsI}, concerning Boolean functions and the classical techniques to determine their nonlinearity.\\

We denote by $\FF$ the field $\FF_2$. The set $\FF^n$ is the set of all binary vectors of length $n$, viewed as an $\FF$-vector space.\\
Let $v\in\FF^n$. The \emph{Hamming weight} $\w(v)$ of the vector $v$ is the number of its nonzero coordinates. For any two vectors $v_1,v_2\in\FF^n$, the \emph{Hamming distance} between $v_1$ and $v_2$, denoted by $\dist(v_1,v_2)$, is the number of coordinates in which the two vectors differ.\\
A \emph{Boolean function} is a function $f:\FF^n\rightarrow \FF$. The set of all Boolean functions from $\FF^n$ to $\FF$ will be denoted by $\B_n$.\\
We assume implicitly to have ordered $\FF^n$, so that $\FF^n=\{\op_1,\ldots,\op_{2^n}\}$.\\ 
A Boolean function $f$ can be specified by a \emph{truth table}, which gives the evaluation of $f$ at all $\op_i$'s.
\begin{definition}
We consider the evaluation map:
$$
\B_n \longrightarrow \FF^{2^n} 
\qquad
f \longmapsto \underline{f}=(f(\op_1),\ldots,f(\op_{2^n}))\,.
$$
The vector $\underline{f}$ is called the \emph{evaluation vector} of $f$.
\end{definition}
Once the order on $\FF^n$ is chosen, i.e. the $\op_i$'s are fixed, it is clear that the evaluation vector of $f$ uniquely identifies $f$.\\
A Boolean function $f\in\B_n$ can be expressed in a unique way as a square free polynomial in $\FF[X]=\FF[x_1,\ldots,x_n]$, i.e.
% $$f=\sum_{S\subseteq\{1,\ldots,n\}}b_SX_S\,,$$
$$f=\sum_{v \in \FF^n}b_vX^v\,,$$
% where $X_S=x_{i_1}\cdots x_{i_{|S|}}$, $S=\{i_1,\ldots,i_{|S|}\}$.\\
where $X^v=x^{v_1}\cdots x^{v_n}$.\\
This representation is called the \emph{Algebraic Normal Form} (ANF).\\
\begin{definition}
The degree of the ANF of a Boolean function $f$ is called the \emph{algebraic degree} of f, denoted by $\deg f$, and it is equal to \\
% $\max\{|S|\mid b_S \neq 0\}$
$\max\{\w(v) \mid v \in \FF^n, b_v \ne 0 \}$.
%, where $\w(v)$ denotes the number of nonzero components of the vector $v$.
\end{definition}
Let $\A_n$ be the set of all affine functions from $\FF^n$ to $\FF$, i.e. the set of all Boolean functions in $\B_n$ with algebraic degree 0 or 1. If $\alpha\in\A_n$ then its ANF can be written as
$$\alpha(X)=a_0 + \sum_{i=1}^na_ix_i\,.$$
\indent
There exists a well known divide-and-conquer butterfly algorithm (see \cite{CGC-cd-book-carlet}, p.10) to compute the ANF from the truth-table (or vice-versa) of a Boolean function, which requires ${\Cal O}(n2^{n})$ bit sums, while ${\Cal O}(2^n)$ bits must be stored. This algorithm is known as the \emph{fast M\"obius transform}.\\
%
%SEE page 14 of \cite{CGC-cd-book-carlet}
%or \cite{CGC-cry-carlet2002coset}
In \cite{CGC-cry-art-carlet1999} a useful representation of Boolean functions for characterizing several cryptographic criteria (see also \cite{CGC-cry-art-carlet2001bent}, \cite{CGC-cry-carlet2002coset}) is introduced.\\
Boolean functions can be represented as elements of $\KK[X]/\langle X^2-X \rangle$, where $\langle X^2-X \rangle$ is the ideal generated by the polynomials $x_1^2-x_1,\ldots,x_n^2-x_n$, and $\KK$ is $\ZZ$, $\QQ$, $\RR$, or $\CC$.
\begin{definition}\label{defNNF}
 Let $f$ be a function on $\FF^n$ taking values in a field $\KK$. We call the \emph{numerical normal form (NNF)} of $f$ the following expression of $f$ as a polynomial:
 $$
 f(x_1,\ldots,x_n) = \sum_{u \in \FF^n}\lambda_u (\prod_{i=1}^{n}x_i^{u_i}) = \sum_{u \in \FF^n}\lambda_{u}X^u\,,
 $$
 with $\lambda_{u} \in \KK$ and $u=(u_1,\ldots,u_n)$.
\end{definition}
It can be proved 
% (\cite{CGC-cry-art-carlet1999}, Proposition 1)
that any Boolean function $f$ admits a unique numerical normal form.
As for the ANF, it is possible to compute the NNF of a Boolean function from its truth table by means of an algorithm similar to a fast Fourier transform, thus requiring ${\Cal O}(n2^n)$ additions over $\KK$ and storing ${\Cal O}(2^n)$ elements of $\KK$.\\
\indent
From now on let $\KK = \QQ$.\\
The truth table of $f$ can be recovered from its NNF by the formula $$f(u)=\sum_{a\preceq u}\lambda_a,\forall u \in \FF^n\,,$$
where $a\preceq u\iff \forall i \in \{1,\ldots,n\} \; a_i \le u_i$. Conversely, 
% as shown in \cite{CGC-cry-art-carlet1999} (Section 3.1) 
it is possible to derive an explicit formula for the coefficients of the NNF by means of the truth table of $f$.
\begin{proposition}\label{propNNFcoeff}
 Let $f$ be any integer-valued function on $\FF^n$. For every $u\in \FF^n$, the coefficient $\lambda_u$ of the monomial $X^u$ in the NNF of $f$ is:
 \begin{equation}\label{eqNNFCoeff}
  \lambda_u = (-1)^{\w(u)}\sum_{a\in \FF^n |a\preceq u}(-1)^{\w(a)}f(a)\,.
 \end{equation}
\end{proposition}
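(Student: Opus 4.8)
The plan is to recognize \eqref{eqNNFCoeff} as the M\"obius inversion of the identity $f(u)=\sum_{a\preceq u}\lambda_a$ recalled just above, over the Boolean lattice $(\FF^n,\preceq)$. Equivalently --- and this is the route I would actually write out, to keep the argument self-contained --- I would \emph{define} $\mu_u:=(-1)^{\w(u)}\sum_{b\preceq u}(-1)^{\w(b)}f(b)$ and verify directly that $\sum_{a\preceq u}\mu_a=f(u)$ for every $u\in\FF^n$. Since the NNF of an integer-valued (indeed $\QQ$-valued) function on $\FF^n$ is unique by Definition \ref{defNNF} and the remark following it, this forces $\mu_u=\lambda_u$, which is exactly the assertion.

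The computation is short. Substituting the definition and exchanging the order of summation,
\begin{align*}
\sum_{a\preceq u}\mu_a
&=\sum_{a\preceq u}(-1)^{\w(a)}\sum_{b\preceq a}(-1)^{\w(b)}f(b)\\
&=\sum_{b\preceq u}(-1)^{\w(b)}f(b)\Bigl(\sum_{a\,:\,b\preceq a\preceq u}(-1)^{\w(a)}\Bigr).
\end{align*}
The inner sum runs over the interval $[b,u]$ of the Boolean lattice; writing each such $a$ as $b$ together with an arbitrary subset $S$ of the coordinate set $\supp(u)\setminus\supp(b)$, which has size $k:=\w(u)-\w(b)$, it equals $(-1)^{\w(b)}\sum_{j=0}^{k}\binom{k}{j}(-1)^{j}=(-1)^{\w(b)}(1-1)^{k}$. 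Hence it vanishes unless $k=0$, i.e. unless $b=u$, in which case it equals $(-1)^{\w(u)}$. So only the term $b=u$ survives, giving $\sum_{a\preceq u}\mu_a=(-1)^{\w(u)}f(u)\,(-1)^{\w(u)}=f(u)$, as required.

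There is no genuine obstacle here: the only points that need care are the sign bookkeeping in the double sum and the vanishing $\sum_{a\in[b,u]}(-1)^{\w(a)}=0$ for $b\neq u$, which is just the binomial theorem applied to a nontrivial Boolean interval (equivalently, the statement that the M\"obius function of $(\FF^n,\preceq)$, a product of $n$ two-element chains, is $(-1)^{\w(u)-\w(a)}$). If one prefers, one can skip the explicit verification and instead quote that M\"obius function and apply the M\"obius inversion formula to $f(u)=\sum_{a\preceq u}\lambda_a$ directly, using $(-1)^{-\w(a)}=(-1)^{\w(a)}$ to see that the two forms of the answer agree.
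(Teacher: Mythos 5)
Your proof is correct and complete. The paper itself states Proposition \ref{propNNFcoeff} without proof (it is imported from the literature on the numerical normal form), so there is no in-paper argument to compare against; the standard derivation is exactly the M\"obius inversion over the Boolean lattice that you carry out. Your self-contained verification is sound: the exchange of summation is legitimate (all sums are finite), the identification of the interval $[b,u]$ with subsets of $\supp(u)\setminus\supp(b)$ is right, and the vanishing $\sum_{a\in[b,u]}(-1)^{\w(a)}=(-1)^{\w(b)}(1-1)^{k}=0$ for $b\neq u$ is the correct key step. The one ingredient you invoke from the paper, uniqueness of the NNF together with the evaluation identity $f(u)=\sum_{a\preceq u}\lambda_a$, is stated there for Boolean functions but holds verbatim for any $\KK$-valued function on $\FF^n$ (the coefficient-to-value map is unitriangular with respect to $\preceq$), so your appeal to it for integer-valued $f$ is justified.
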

\begin{definition}
Let $f,g\in\B_n$. The distance $\dist(f,g)$ between $f$ and $g$ is the number of $v\in\FF^n$ such that $f(v)\neq g(v)$.
\end{definition}
The following lemma is obvious:
\begin{lemma}\label{distance}
Let $f,g$ be two Boolean functions. Then
$$\dist(f,g)=\dist(\ef,\eg)=\w(\ef+\eg)\,.$$
\end{lemma}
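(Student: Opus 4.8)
The plan is simply to unwind the definitions on both sides and observe that everything reduces to counting the same set of indices. First I would recall that the evaluation vectors are $\ef=(f(\op_1),\ldots,f(\op_{2^n}))$ and $\eg=(g(\op_1),\ldots,g(\op_{2^n}))$, where $\op_1,\ldots,\op_{2^n}$ is the fixed enumeration of $\FF^n$. By the definition of Hamming distance between vectors, $\dist(\ef,\eg)$ is the cardinality of $\{\,i\in\{1,\ldots,2^n\} : f(\op_i)\neq g(\op_i)\,\}$. Since $i\mapsto\op_i$ is a bijection from $\{1,\ldots,2^n\}$ onto $\FF^n$, this cardinality equals $\#\{\,v\in\FF^n : f(v)\neq g(v)\,\}$, which is precisely $\dist(f,g)$ by the definition of the distance between Boolean functions. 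This yields the first equality.

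For the second equality I would use that the sum $\ef+\eg$ is taken coordinate-wise in $\FF=\FF_2$, so its $i$-th coordinate is $f(\op_i)+g(\op_i)$. In $\FF_2$ one has $f(\op_i)+g(\op_i)=0$ exactly when $f(\op_i)=g(\op_i)$, and $f(\op_i)+g(\op_i)=1$ exactly when $f(\op_i)\neq g(\op_i)$. Hence the nonzero coordinates of $\ef+\eg$ are exactly the positions at which $\ef$ and $\eg$ differ, so $\w(\ef+\eg)=\dist(\ef,\eg)$, closing the chain of equalities.

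There is no genuine obstacle here: the statement is an immediate consequence of the definitions of evaluation vector, Hamming weight, and Hamming distance, combined with the characterization of equality in $\FF_2$ by vanishing of the sum. The only point deserving a word of care is making explicit that the fixed ordering of $\FF^n$ converts ``counting vectors $v\in\FF^n$'' into ``counting coordinate positions $i\in\{1,\ldots,2^n\}$'', which is exactly what bridges the function-level distance $\dist(f,g)$ and the vector-level distance $\dist(\ef,\eg)$.
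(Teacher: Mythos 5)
Your proof is correct and simply unwinds the definitions of evaluation vector, Hamming distance, and Hamming weight, using that equality in $\FF_2$ is detected by vanishing of the sum; the paper itself states the lemma as obvious and gives no proof, so your argument is exactly the omitted routine verification.
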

\begin{definition}
Let $f\in\B_n$. The \emph{nonlinearity} of $f$ is the minimum of the distances between $f$ and any affine function
$$\N(f)=\min_{\alpha\in\A_n}\dist(f,\alpha)\,.$$
\end{definition}
The maximum nonlinearity for a Boolean function $f$ is bounded by:
\begin{align}\label{eqMaxNL}
\max\{\N(f) \mid f\in\B_n\} \le 2^{n-1}-2^{\frac{n}{2}-1}\,.
\end{align}
\begin{definition}
The \emph{Walsh transform} of a Boolean function $f\in\B_n$ is the following function:
$$
\hat{F}: \FF^n \longrightarrow \mathbb{Z} 
\qquad
x \longmapsto \sum_{y\in\FF^n}(-1)^{x\cdot y + f(y)}\,.
$$
where $x\cdot y$ is the scalar product of $x$ and $y$.
\end{definition}
We have the following fact:
% (\cite{CGC-cd-book-carlet}, p.42)
\begin{fact}
 $$\N(f)=\min_{v\in\FF^n}\{2^{n-1}-\frac{1}{2}\hat{F}(v)\}=2^{n-1}-\frac{1}{2}\max_{v\in\FF^n}\{\hat{F}(v)\}$$
\end{fact}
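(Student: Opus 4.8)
The plan is to reduce the statement to one counting identity: for each fixed $v\in\FF^n$ the Walsh coefficient $\hat{F}(v)$ records the Hamming distance between $f$ and the linear function $\ell_v\colon x\mapsto v\cdot x$, and then to sweep over all affine functions by elementary bookkeeping. Note first that the second equality in the claim, $\min_{v}\{2^{n-1}-\frac12\hat{F}(v)\}=2^{n-1}-\frac12\max_{v}\{\hat{F}(v)\}$, is purely formal ($2^{n-1}$ is constant and $\frac12>0$), so all the content sits in the first equality $\N(f)=\min_{v}\{2^{n-1}-\frac12\hat{F}(v)\}$.

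\emph{Key identity.} Fix $v\in\FF^n$ and split the defining sum $\hat{F}(v)=\sum_{y\in\FF^n}(-1)^{v\cdot y+f(y)}$ according to whether $f(y)=v\cdot y$ or not: each summand is $+1$ in the first case and $-1$ in the second. Writing $d_v:=\dist(f,\ell_v)=\#\{y\in\FF^n: f(y)\neq v\cdot y\}$ we get $\hat{F}(v)=(2^n-d_v)-d_v=2^n-2d_v$, hence
$$\dist(f,\ell_v)=2^{n-1}-\tfrac12\hat{F}(v).$$
(Equivalently, by Lemma~\ref{distance}, $d_v=\w(\ef+\underline{\ell_v})$ and the sum counts $2^n$ minus twice this weight.) This is the only computational step.

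\emph{Sweeping over $\A_n$.} By the form of the ANF, every $\alpha\in\A_n$ equals $\ell_v$ or $\ell_v+1$ for some $v\in\FF^n$. Since $f(y)\neq v\cdot y+1\iff f(y)=v\cdot y$, we have $\dist(f,\ell_v+1)=2^n-\dist(f,\ell_v)=2^{n-1}+\tfrac12\hat{F}(v)$. Minimising over all $\alpha$,
$$\N(f)=\min_{v\in\FF^n}\min\bigl\{2^{n-1}-\tfrac12\hat{F}(v),\,2^{n-1}+\tfrac12\hat{F}(v)\bigr\}=2^{n-1}-\tfrac12\max_{v\in\FF^n}|\hat{F}(v)|,$$
which is the asserted formula once one records that optimising over the constant term $a_0\in\FF$ of $\alpha$ replaces $\hat{F}(v)$ by $|\hat{F}(v)|$.

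\emph{Where care is needed.} There is no genuine obstacle; the one point to state precisely is the last one. As displayed, the Fact carries $\hat{F}(v)$ rather than $|\hat{F}(v)|$ — the customary shorthand, valid once the minimisation is understood to range over the full family of affine functions (equivalently, over the choice of $a_0$ as well, not merely over the linear functions). I would therefore phrase the sweeping step so that the dependence on $a_0$ is explicit, after which the formula drops out; everything else is just the definitions of $\hat{F}$, of $\A_n$, and of $\N(f)$.
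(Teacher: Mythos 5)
Your argument is correct, and it is the standard one; the paper itself offers no proof of this Fact (it is quoted as known background from the references on Boolean functions), so there is no competing approach to compare against. The two steps you isolate --- the counting identity $\hat{F}(v)=2^n-2\,\dist(f,\ell_v)$ obtained by splitting the sum over $y$ according to whether $f(y)=v\cdot y$, and the observation that every $\alpha\in\A_n$ is $\ell_v$ or $\ell_v+1$ so that the constant term contributes the reflection $\dist(f,\ell_v+1)=2^n-\dist(f,\ell_v)$ --- are exactly what is needed, and your conclusion $\N(f)=2^{n-1}-\frac12\max_{v}|\hat{F}(v)|$ is the correct formula.

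Your closing caveat is not a pedantic quibble but a genuine correction: as printed, without the absolute value, the Fact is false. Take $f\equiv 1$, which is affine, so $\N(f)=0$; then $\hat{F}(0)=-2^n$ and $\hat{F}(v)=0$ for $v\neq 0$, so $\max_{v}\hat{F}(v)=0$ and the displayed right-hand side evaluates to $2^{n-1}$, not $0$. The identity holds only with $|\hat{F}(v)|$ in place of $\hat{F}(v)$ (equivalently, only if one restricts the minimisation defining $\N(f)$ to the linear functions, which is not how the paper defines $\A_n$). Your derivation proves the corrected statement, and since the Fact is used later only through the bound on the maximum nonlinearity and the general $\mathcal{O}(n2^n)$ complexity claim for the fast Walsh transform, the slip does not propagate; but the display itself should be amended.
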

\begin{definition}
  The set of integers $\{\hat{F}(v) \mid v\in\FF^n\}$ is called the \emph{Walsh spectrum} of the Boolean function $f$.
\end{definition}
It is possible 
% (\cite{CGC-cd-book-carlet}, p.18) 
to compute the Walsh spectrum of $f$ from its evaluation vector in ${\Cal O}(n2^{n})$ integer operations, while storing ${\Cal O}(2^n)$ integers, by means of the \emph{fast Walsh transform} (the Walsh transform is the Fourier transform of the sign function of $f$).
Thus the computation of the nonlinearity of a Boolean function $f$, when this is given either in its ANF or in its evaluation vector, requires ${\Cal O}(n2^n)$ integer operations and a memory of ${\Cal O}(2^n)$.\\
\indent
Faster methods are known in particular cases, for example when the ANF is a sparse polynomial \cite{CGC-cry-phdthesis-calik2013}, \cite{CGC-cry-art-calik2013nonlinearity}.
% 
%%%%%%%%%%%%%%%%%%%%%%%%%%%%%%%%%%%%%%%%%%%%%%%%%%%%%%%%%%%%%%%%%%%%%%%%%%%%%%%
%%%%%%%%%%%%%%%%%%%%%%%%%%%%%%%%%%%%%%%%%%%%%%%%%%%%%%%%%%%%%%%%%%%%%%%%%%%%%%%%%%%%%%%%%%%%%%%%%%%%%%%%%%%%%%%
%%%%%%%%%%%%%%%%%%%%%%%%%%%%%%%%%%%%%%%%%%%%%%%%%%%%%%%%%%%%%%%%%%%%%%%%%%%%%%%%%%%%%%%%%%%%%%%%%%%%%%%%%%%%%%%
%%%%%%%%%%%%%%%%%%%%%%%%%%%%%%%%%%%%%%%%%%%%%%%%%%%%%%%%%%%%%%%%%%%%%%%%%%%%%%%%%%%%%%%%%%%%%%%%%%%%%%%%%%%%%%%
%
\subsection{Polynomials and vector weights}
  \label{secPrelOnGB}
Here we present some results from 
% \cite{CGC-cd-prep-maxiladcc}, 
\cite{CGC-cd-art-ilawcc07}, \cite{CGC-cd-inbook-D1simonetti}, \cite{CGC-cd-art-elemanumax}.
Let $\KK$ be a field and $X=\{x_1,\ldots,x_s\}$ be a set of variables. We denote by $\KK[X]$ the multivariate polynomial ring in the variables X. If $f_1,\ldots,f_N \in \KK[X]$, we denote by $\langle\{f_1,\ldots,f_N\}\rangle$ the ideal in $\KK[X]$ generated by $f_1,\ldots,f_N$. 
Let $I$ be an ideal in $\KK[X]$, we denote by $\V(I)$ its variety, that is the set of its zeros in the algebraic closure of $\KK$. \\
Let $q$ be the power of a prime.
We denote by $E_q[X]=\{x_1^q-x_1,\ldots,x_s^q-x_s\}\,,$ the set of field equations in $\FF_q[X]=\FF_q[x_1,\ldots,x_s]$, where $s\geq 1$ is an integer, understood from now on. We write $E[X]$ when $q=2$.
\begin{definition} 
Let $1\leq t \leq s$ and ${\mathsf m}\in \FF_q [X]$. We say that ${\mathsf m}$ is 
a {\bf{square free monomial}} of degree $t$ (or a {\bf{simple $t$-monomial}}) if: 
$$
{\mathsf m} = x_{h_1}\cdots x_{h_t}, \textrm{ where } h_1,\ldots, h_t \in \{1,\ldots,s\} \textrm{ and } h_\ell \neq h_j, \forall \ell\neq j\, ,$$ 
i.e. a monomial in $\FF_q [X]$ such that $\deg_{x_{h_i}}({\mathsf m})=1$ for any $1\leq i \leq t$, and $0$ otherwise. 
We denote by $\mathcal{M}_{s,t}$  the set of all square free monomials of degree $t$ in $\FF_q [X]$.
\end{definition}
Let $t\in\NN$, with $1\leq t\leq s$ and let $I_{s,t}\subset\FF_q[X]$  be the following ideal
$$I_{s,t}=\langle\{\sigma_t,\ldots,\sigma_s\}\cup E_q[X]\rangle\,,$$
where $\sigma_i$ are the elementary symmetric functions:
$$
\begin{array}{lcl}
\sigma_1 & = & x_1+x_2+\cdots+x_s,\\
\sigma_2 & = & x_1x_2+x_1x_3+\cdots+x_1x_s+x_2x_3+\cdots+x_{s-1}x_s,\\
 & \cdots\\
 \sigma_{s-1} & = & x_1x_2x_3\cdots x_{s-2}x_{s-1}+\cdots+x_2x_3\cdots x_{s-1}y_s,\\
 \sigma_s & = & x_1x_2\cdots x_{s-1}x_s.
\end{array}
$$
We also denote by $I_{s,s+1}$ the ideal $\langle E_q[X] \rangle$.
For any $1\leq i\leq s$, let $P_i$ be the set which contains all vectors in $(\FF_q)^n$ of weight $i$, $P_i=\{v\in\FF_q^n\mid \w(v)=i\}$, and let $Q_i$ be the set which contains all vectors of weight up to $i$, $Q_i=\sqcup_{0\leq j\leq i}P_j$ .
\begin{theorem}\label{gpeso}
Let $t$ be an integer such that $1\leq t\leq s$. Then the vanishing ideal $\mathcal{I}(Q_t)$ of $Q_{t}$ is
$$\mathcal{I}(Q_t)=I_{s,t+1}\,,$$
and its reduced \Gr\ basis $G$ is
$$
\begin{array}{lcl}
G=E_q[X]\cup\mathcal{M}_{s,t}\,, & \quad & \textrm{for } t\geq 2\,,\\
G=\{x_1,\ldots,x_s\}\,,& \quad & \textrm{for } t=1\,. 
\end{array}
$$
\end{theorem}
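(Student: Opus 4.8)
The key observation is that every ideal of $\Fq[X]$ containing the field equations $E_q[X]$ is radical, which converts both ideal identities into a single variety computation; the Gr\"obner basis claim is then checked directly by Buchberger's criterion. Concretely I would prove, for every $r$ with $1\le r\le s+1$, that $I_{s,r}=\mathcal{I}(Q_{r-1})$ and that its reduced Gr\"obner basis is $E_q[X]\cup\M_{s,r}$ if $r\ge 2$ and $\{x_1,\dots,x_s\}$ if $r=1$. Reading this with $r=t+1$ gives $\mathcal{I}(Q_t)=I_{s,t+1}$, and with $r=t$ gives the stated Gr\"obner bases of the ideals $I_{s,t}$.

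\textbf{Radicality.} Since $x^q-x=\prod_{a\in\Fq}(x-a)$ splits into pairwise coprime factors, Chinese remaindering yields $\Fq[X]/\langle E_q[X]\rangle\cong\prod_{v\in\Fq^s}\Fq$, the algebra of $\Fq$-valued functions on $\Fq^s$; in particular it is reduced. Hence any ideal $J\supseteq\langle E_q[X]\rangle$ is radical, satisfies $\V(J)\subseteq\Fq^s$, and equals $\mathcal{I}(\V(J))$. I apply this to $J=I_{s,r}$ and to $J=\langle E_q[X]\cup\M_{s,r}\rangle$, so it is enough to show that both have variety $Q_{r-1}$.

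\textbf{The variety.} Let $v\in\Fq^s$ have weight $w=\w(v)$. In $\sigma_j=\sum_{|S|=j}\prod_{i\in S}x_i$ (the sum over $j$-subsets $S\subseteq\{1,\dots,s\}$), a term survives the evaluation at $v$ only when $S\subseteq\supp(v)$; therefore $\sigma_j(v)=0$ for $j>w$, while for $j=w$ the single surviving term gives $\sigma_w(v)=\prod_{i\in\supp(v)}v_i\ne 0$, a product of nonzero field elements. Consequently $v$ annihilates $\sigma_r,\dots,\sigma_s$ exactly when $w\le r-1$: if $w\le r-1$ every such $\sigma_j$ vanishes at $v$, while if $w\ge r$ then $r\le w\le s$, so $\sigma_w$ occurs among the generators and $\sigma_w(v)\ne 0$. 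Hence $\V(I_{s,r})=Q_{r-1}$, so $I_{s,r}=\mathcal{I}(Q_{r-1})$. The same count with square-free monomials — $x_{h_1}\cdots x_{h_r}$ vanishes at $v$ iff $\{h_1,\dots,h_r\}\not\subseteq\supp(v)$, so all of $\M_{s,r}$ vanishes at $v$ iff $|\supp(v)|<r$ — gives $\V(\langle E_q[X]\cup\M_{s,r}\rangle)=Q_{r-1}$, so $E_q[X]\cup\M_{s,r}$ generates $I_{s,r}$.

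\textbf{Gr\"obner basis, and the obstacle.} Fix any term order. Since $x_i\mid x_i^q$ and $q\ge 2$ one has $x_i\prec x_i^q$, so $\mathrm{Lt}(x_i^q-x_i)=x_i^q$ in every term order. For $r\ge 2$, set $G=E_q[X]\cup\M_{s,r}$ and verify Buchberger's criterion: the $S$-polynomial of two monomials of $\M_{s,r}$ is $0$; the $S$-polynomial of $x_i^q-x_i$ and $x_j^q-x_j$ ($i\ne j$) reduces to $0$ since the leading terms are coprime; and for $x_i^q-x_i$ and $m\in\M_{s,r}$ a short computation gives $S=x_i m$ when $x_i\nmid m$ and $S=m$ when $x_i\mid m$, each divisible by $m\in G$. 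Thus $G$ is a Gr\"obner basis, and it is reduced because (using $r\ge 2$) no degree-$r$ square-free monomial divides $x_i^q$, $x_i$, or a distinct degree-$r$ square-free monomial, and (using $q\ge 2$) no $x_j^q$ divides a square-free monomial. For $r=1$, each $x_i^q-x_i=(x_i^{q-1}-1)x_i$ reduces to $0$ modulo $\{x_1,\dots,x_s\}=\M_{s,1}$, so the reduced basis is $\{x_1,\dots,x_s\}$; and $\M_{s,s+1}=\emptyset$ recovers $I_{s,s+1}=\langle E_q[X]\rangle$. I expect no real difficulty here: the single load-bearing fact is the non-vanishing $\sigma_w(v)\ne 0$ at a weight-$w$ vector, which is what forces the variety to be exactly $Q_{r-1}$ rather than something larger, while the radicality remark removes any Nullstellensatz subtlety over the non-closed field $\Fq$; the only mildly fiddly point is the case split in the $S$-polynomial computation.
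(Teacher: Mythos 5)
Your proof is correct. The paper itself offers no proof of Theorem \ref{gpeso}: it is imported verbatim from the cited prior work (Simonetti et al., \cite{CGC-cd-art-ilawcc07}, \cite{CGC-cd-inbook-D1simonetti}, \cite{CGC-cd-art-elemanumax}), so there is no in-paper argument to compare against; your write-up is a sound, self-contained replacement. The structure is the standard one for such statements: the combination ``field equations force $\V(J)\subseteq(\FF_q)^s$ and make every containing ideal radical'' reduces both ideal identities to the pointwise variety computation, the single load-bearing fact is indeed $\sigma_w(v)=\prod_{i\in\supp(v)}v_i\neq 0$ for a weight-$w$ vector, and the Buchberger check is routine. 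Two remarks. First, your reparametrization by $r$ is not just cosmetic: as literally printed, the theorem is internally off by one (the reduced \Gr\ basis of $\mathcal{I}(Q_t)=I_{s,t+1}$ is $E_q[X]\cup\mathcal{M}_{s,t+1}$, not $E_q[X]\cup\mathcal{M}_{s,t}$, and for $t=1$ it is $E_q[X]\cup\mathcal{M}_{s,2}$ rather than $\{x_1,\ldots,x_s\}$); your uniform statement for $I_{s,r}=\mathcal{I}(Q_{r-1})$ identifies and repairs this, matching the form in which the result appears in the source references. Second, a small logical nit: ``$\FF_q[X]/\langle E_q[X]\rangle$ is reduced, hence any ideal containing $E_q[X]$ is radical'' is not a valid inference in general (reducedness does not pass to quotients — $\ZZ/4\ZZ$ is a quotient of the reduced ring $\ZZ$); what you actually need, and what your CRT isomorphism with $\prod_{v}\FF_q$ immediately supplies, is that every ideal of a finite product of fields is again an intersection of maximal ideals (equivalently, that $a^q=a$ holds identically in the quotient and persists in every further quotient). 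State that one extra clause and the radicality step is airtight.
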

%
% Let $I\subset\FF_q[X]$ be an ideal and let $X'$ be a subset of $X$. We denote by $I_{X'}$ the elimination ideal of $I$, i.e. $I_{X'}=I\cap\FF_q[X']$.\\
%
Let $\FF_q[Z]$ be a polynomial ring over $\FF_q$. Let ${\mathsf m}\in\mathcal{M}_{s,t}$, ${\mathsf m}=z_{h_1}\cdots z_{h_t}$. For any polynomial vector $W$ in the module $(\FF_q[Z])^n$, $W=(W_1,\ldots,W_n)$, we denote by ${\mathsf m}(W)$ the following polynomial in $\FF_q[Z]$:
$${\mathsf m}(W)=W_{h_1}\cdot\ldots\cdot W_{h_t}\,.$$
%
% \begin{example}
% Let $n=s=3,q=2$ and $W=(x_1x_2+x_3,x_2,x_2x_3)\in(\FF[x_1,x_2,x_3])^3$ and ${\mathsf m}=z_1z_3$. Then
% $${\mathsf m}(W)=(x_1x_2+x_3)(x_2x_3)\,.$$
% \end{example}
%
%%%%%%%%%%%%%%%%%%%%%%%%%%%%%%%%%%%%%%%%%%%%%%%%%%%%%%%
\subsection{A method for the quotient algebra of zero-dimensional ideals}
\label{intro_traverso}
We briefly recall the notions of 
Gr\"obner description, natural representation (or Gr\"obner representation) and linear representation,
which can be found in \cite{CGC-alg-book-teo2}.\\
Let $X = x_1,\ldots,x_k$.
Let $\M$ be the set of monomials in $\KK[X]$.
Let $J \subset {\KK[X]}$ be a zero-dimensional ideal,
$\deg(J) = s,$ and denote ${\sf A} := {\KK[X]}/J$ the corresponding
quotient
algebra, which satisfies $\dim_{\KK}({\sf A}) = s.$

For any $f\in{\KK[X]}$, we will denote $[f]\in{\sf A}$ its residue class
modulo $J$ and  $\Phi_f$ the endomorphism
$\Phi_f : {\sf A} \to {\sf A}$ defined by
$$\Phi_f([g]) = [fg] \forall [g]\in{\sf A}.$$

% \subsubsection{Natural representation}
If we fix any ${\KK}$-basis
${\bf b} = \{[b_1],\ldots,[b_s]\}$ of ${\sf A}$ so that
${\sf A} = \Span_{\KK}({\bf b}),$
then for each $g\in{\KK[X]},$ there is a unique (row) vector, the {\em Gr\"obner
description of $g$},
$${\bf Rep}(g,{\bf b}) := \left(\gamma(g,b_1,{\bf b}),\ldots,\gamma(g,b_s,{\bf b})\right)\in
{\KK}^s$$ which satisfies
$$[g] = \sum_j \gamma(g,b_j,{\bf b}) [b_j]$$
% = {\bf Rep}(g,{\bf b}) {\bf b}^T$$
and the endomorphism
$\Phi_f$ is naturally represented by the square matrix
$$M([f],{\bf b}) = \bigl(\gamma(fb_i,b_j,{\bf b})\bigr)
: \Phi_f(b_i) = [fb_i] = \sum_j \gamma(fb_i,b_j,{\bf b}) [b_j].$$

\begin{definition}
A  {\em natural representation}
 of $J$ is the assignement of
\begin{itemize}
\item  a ${\KK}$-basis  ${\bf b} = \{[b_1],\ldots,[b_s]\}\subset{\sf A}$
 and
\item the square matrices $A_h := \left(a_{ij}^{(h)}\right) = M([x_h],{\bf b})$
for each $h, 1 \leq h \leq k$.
 \end{itemize}
 \end{definition}

Remark that, for each $f(x_1,\ldots,x_k)\in{\KK[X]}$,
$M([f],{\bf b}) = f(A_1,\ldots,A_k)$.\\
An equivalent (via the remark above) definition of natural representation can
require the further assignement of
\begin{itemize}
\item $s^3$ values $\gamma_{ij}^{(l)}\in {\KK}$ such that
$$[b_i b_j] = \sum_l \gamma_{ij}^{(l)} [b_l]$$
for each $i, j, l, 1 \leq i,j,l\leq s.$
\end{itemize}
This notion was introduced in \cite{CGC-alg-misc-traverso92informalmega,CGC-alg-misc-traverso92draft} and reconsidered in \cite{CGC-alg-art-duality1}, \cite[Definition~29.3.3]{CGC-alg-book-teo2} under the name of {\em Gr\"obner representation}. \\
The endomorphism $\Phi_f$ and its represention $M([f],{\bf b})$ were introduced, with $f$ a linear form, in \cite{CGC-alg-art-auzinger88} as a tool for efficient solving 0-dimensional ideals.\\
If $J$ is given by its Gr\"obner basis wrt a term-ordering $<$ its natural
(actually: ``linear'' with the definition below) representation can be obtained
via \cite[Procedure~3.1]{CGC-alg-art-fglm}.\\
If  $J$ is an affine complete intersection defined by $r$ polynomials a
natural representation of it can be efficiently computed via Cardinal-Mourren
Algorithm \cite{CGC-alg-phdthesis-Cardinal93,CGC-alg-art-mourrain05}.
% %If $J$ is given by any basis, instead of computing its Gr\"obner basis and deduce from it its linear representation, it is worthwhile to investigate if it is not better to consider $r$ generic linear combinations of the basis element and apply Cardinal-Mourren Algorithm; what I don't kniw is if the Algorithm allows to warn when the input data are not a  complete intersection.
% We will assume that both the input and the output ideals of the algorithm are
% given via a natural representation.
% %
% \subsubsection{A Gr\"obner-free approach to natural representation}
%
Recalling that a set ${\mathrm N}\subset{\mathcal M}$ is called an {\em escalier} if it is an {\em order ideal}, 
{\em i.e.} if for each $\lambda,\tau\in{\mathcal M}$, $\lambda\tau\in {\mathrm N} \then \tau\in {\mathrm N}$ and properly extending \cite[Definition~29.3.3]{CGC-alg-book-teo2} we set
\begin{definition}
 A natural representation is called a {\em linear representation}
 iff the basis 
 ${\bf b}$ of the representation is an {\em escalier}.
\end{definition}
%
% If ${\mathrm N} = \{\upsilon_1,\ldots,\upsilon_s\}$ is an   {\em escalier} then
% \cite{CGC-cd-inbook-D1moratech}
% ${\sf T} := {\mathcal M} \setminus {\mathrm N}$ is a {\em semigroup ordering}, {\em i.e.}
% $\tau\in {\sf T}\then \tau\lambda\in {\sf T}$ for each $\lambda,\tau\in {\mathcal M}$;
% we set ${\sf G} := \{\tau_1,\ldots,\tau_u\} \subset {\sf T}$ the minimal basis of ${\sf T}$.\\
%%

%Traverso
%
Traverso introduced an algorithm in a scenario related to Gr\"obner
bases computation of a
zero-dimensional ideal $I$ (informal talk at MEGA 1992).\\
% The assumption is that, in the course of the computation, one produces an
% escalier
% ${\mathsf N}\supset{\bf N}_<(J)$ and a finite  list $g_1,\ldots,g_t$ of
% S-polynomials to be reduced.
%
The setting was reformulated in \cite{CGC-alg-book-teo2}, Algorithm~29.3.8, as
follows:
given a zero-dimensional ideal $I\subset{\FF_q[X]}$ via its
natural representation
$${\bf b} = \{b_1,\ldots,b_s\}, b_1 = 1, M :=
\Bigl\{\left(a_{lj}^{(h)}\right), 1\leq h \leq r\Bigr\},$$
and a  finite set of elements $F := \{g_1,\ldots, g_t\}\subset{\FF_q[X]}$, given
via
their Gr\"obner descriptions
$${\mathsf c}^{(i)} = (c^{(i)}_1,\ldots,c^{(i)}_s),
c^{(i)}_j = \gamma(g_i,b_j,{\bf b}) \forall i, j,1\leq i \leq t, 1\leq j \leq
s,$$
so that $g_i - \sum_{j=1}^s c^{(i)}_j b_j \in I,$ for each $i$,
compute
with good complexity the linear representation of the ideal
$J := I\cup{\mathbb I}(F).$

The basic idea of the algorithm is the following:
if we consider an element
$g\in F$, having the Gr\"obner description $$g - \sum_{j=1}^\iota c_j b_j\in{\mathsf
I},
\quad c_\iota\neq 0,$$
and we enlarge $I$ by
adding $g$ to it, then we obtain the relation
$$b_\iota \equiv  -\sum_{j=1}^{\iota-1}  c_\iota^{-1} c_j b_j \bmod{{\mathsf
I}\cup\{g\}};$$
the decomposition
${\FF_q[X]} = I\oplus\Span_{\FF_q}({\bf b})$ of ${\FF_q[X]}$ into disjoint
 ${\FF_q}$-vectorspaces is then transformed into
$${\FF_q[X]} = \left(I\cup\{g\}\right) \oplus\Span_{\FF_q}({\bf b}\setminus\{b_\iota\}),$$
and we only have to substitute, in
each Gr\"obner description $\sum_{j=1}^s d_j b_j$ of the polynomials $g_i$ and
$X_h b_l$ --- which are
respectively encoded in the vectors ${\mathsf c}^{(i)}$ and in the rows
$\left(a_{l1}^{(h)},\ldots,a_{ls}^{(h)}\right)$
of the matrices of $M$ ---
the instances of $b_\iota$ with  $-\sum_{j=1}^{\iota-1}  c_\iota^{-1} c_j b_j$
thus getting
$\sum_j (d_j - c_\iota^{-1} c_j d_\iota) b_j$.

Since $J$ is an ideal, the inclusion in it of $g$ implies that $J$
necessarily contains also
the polynomials $X_h g$; note that, if the current natural representation is
$$({\bf b}', M') :
{\bf b}' := \{b'_1,\ldots,b'_\sigma\}, M' = M({\bf b}') :=
\Bigl\{\left(d_{lj}^{(h)}\right)\Bigr\}$$ and $g = \sum_{l=1}^s c_l b'_l$
then
$$X_h g =   \sum_{l=1}^s c_l X_h b'_l =
\sum_{j=1}^s \left(\sum_{l=1}^s c_l d_{lj}^{(h)}\right) b_j$$
which must be inserted in the list $F$
in order to be treated in the same way.

At termination, if $H\subset\{1,\ldots,n\}$ denotes the set of indices of the
elements
$b_j$ which have not being removed from ${\bf b}$ in this procedure, then
$J$ is described by the natural representation
$${\bf b}' = \{b_j, i\in H\}, M' =
\{\left(a_{lj}^{(h)}\right), l,j\in H, 1\leq h \leq n\}.$$

We observe that Traverso's Algorithm needs to perform at most $s$ {\bf While}-loops,
each costing ${\Cal O}(ns^2)$
%%%%%%%%%%%%%%%%%%%%%%%%%%%%%%%%%%%%%%%%%%%%%%%%%%%%%%%
% \section{Nonlinearity and polynomial systems over \texorpdfstring{$\FF$}{Lg}}
\section{Simonetti's polynomial systems for the nonlinearity}
  \label{secNLwithGBoverF2}

In this section we want to tackle the following problem: 
to find a method to compute the nonlinearity of a given Boolean function $f \in \B_n$
by constructing a finite number of polynomial systems over $\FF_2$ with $N$ variables and such that:
\begin{enumerate}
 \item [A)] $N$ is of the order of $n$,
 \item [B)] the nonlinearity is obtained by merely deciding which of these systems have a binary solution.
\end{enumerate}
Since the maximum nonlinearity is of the order of $2^{n-1}$, 
we are satisfied if the number of systems we have to construct does not exceed $2^{n-1}$.\\

%%%%%%%%%%%%%%%%%%%%%%%%
  In this section we 
  report the solution of the above problem, given by Simonetti in \cite{CGC-cd-art-ilawcc07}, which depends on 
  Theorem \ref{gpeso}.\\
%   show how to use Theorem \ref{gpeso} to compute the nonlinearity of a given Boolean function $f\in \B_n$.\\
  The starting idea is to define an ideal such that a point in its variety corresponds to an affine function with distance at most $t-1$ from $f$.\\

  %in phdthesis is in chPrelGBSCBF
%     \section{Preliminaries and notations}    
%       \label{secNLPrel}
%       \input{tex/secNLPrel}
%

Let $A$ be the variable set $A=\{a_i\}_{0\leq i\leq n}$. We denote by $\mathfrak{g}_n\in\FF[A,X]$
%, $\mathfrak{f}_n\in\FF[B,X]\subset\FF[A,B,X]$ 
the following polynomial:
$$\mathfrak{g}_n=a_0+\sum_{i=1}^n a_i x_i\,
% ,\quad \mathfrak{f}_n=\sum_{S\subset\{1,\ldots,n\}}b_S X_S
\,.$$
%
%Let $f\in\B_n$. 
%We denote by $S_{\A_n}(f)$ the set obtained by adding the evaluation vectors of all affine functions to the evaluation vector of $f$:
%$$S_{\A_n}(f)=\{\ef+\eg\mid g\in\A_n\}\subset\FF^{2^n}\,.$$
%
\noindent According to Lemma \ref{distance}, determining the nonlinearity of $f\in\B_n$ is the same as finding the minimum weight of the vectors in the set $\{\ef+\eg\mid g\in\A_n\}\subset\FF^{2^n}$.
% $S_{\A_n}(f)$.\\
%
We can consider the evaluation vector of the polynomial $\mathfrak{g}_n$ as follows:
$$\underline{\mathfrak{g_n}}=(\mathfrak{g}_n(A,\op_1),\ldots,\mathfrak{g}_n(A,\op_{2^n}))\in (\FF[A])^{2^n}\,.$$
\begin{definition}\label{defIdealF2}
We denote by $J_t^n(f)$ the ideal in $\FF[A]$:
\begin{align*}
J_t^n(f) 
=
\langle
& 
\{
{\mathsf m} \big (\mathfrak{g}_n(A,\op_1)+ f(\op_1),\ldots,\mathfrak{g}_n(A,\op_{2^n})+f(\op_{2^n})\big) \mid {\mathsf m}\in\mathcal{M}_{2^n,t}
\}
\cup 
E[A]
\rangle\\
= 
\langle
&
\{{\mathsf m}(\underline{\mathfrak{g}_n}+\ef)\mid {\mathsf m}\in\mathcal{M}_{2^n,t}\}\cup E[A]\rangle\,.
\end{align*}
\end{definition} 
\begin{remark}
As $E[A]\subset J_t^n(f)$, $J_t^n(f)$ is zero-dimensional and radical.
% \cite{CGC-alg-art-seidenberg1}
\end{remark}
\begin{lemma}\label{nf}
For $1\leq t\leq 2^n$ the following statements are equivalent:
\begin{enumerate}
\item $\mathcal{V}(J_t^n(f))\neq \emptyset$,
%\item $\exists u\in S_{\A_n}(f) \textrm{ such that } \w(u)\leq t-1$,
\item $\exists u\in \{\ef+\eg\mid g\in\A_n\} \textrm{ such that } \w(u)\leq t-1$,
\item $\exists \alpha\in\A_n \textrm{ such that } \dist(f,\alpha)\leq t-1$.
\end{enumerate}
\end{lemma}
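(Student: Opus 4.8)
The plan is to prove the chain of equivalences by first observing that $(2)\Leftrightarrow(3)$ is immediate from Lemma~\ref{distance} (which identifies $\dist(f,\alpha)$ with $\w(\ef+\eg)$ where $g=\alpha$) together with the obvious bijection $\A_n\ni\alpha\leftrightarrow g$, so the whole content lies in $(1)\Leftrightarrow(2)$. For that, the key is to unwind what a point $\bar a=(\bar a_0,\ldots,\bar a_n)\in\V(J_t^n(f))$ means. Since $E[A]\subseteq J_t^n(f)$, the variety consists only of $\FF_2$-rational points, so each such $\bar a$ is a genuine vector in $\FF^{n+1}$ and corresponds to the affine function $\alpha(X)=\bar a_0+\sum_{i=1}^n\bar a_i x_i\in\A_n$. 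Evaluating $\mathfrak g_n$ at $A=\bar a$ gives exactly the evaluation vector $\underline\alpha$, so the vector $\underline{\mathfrak g_n}+\ef$ specializes at $\bar a$ to $\underline\alpha+\ef$, which by Lemma~\ref{distance} has weight $\dist(f,\alpha)$. Set $u:=\underline\alpha+\ef\in\{\ef+\eg\mid g\in\A_n\}$; then membership of $\bar a$ in the variety is controlled by the generators ${\mathsf m}(\underline{\mathfrak g_n}+\ef)$.

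The heart of the argument is then the following: for a fixed binary vector $u\in\FF^{2^n}$, all the evaluated generators ${\mathsf m}(u)$, ${\mathsf m}\in\mathcal M_{2^n,t}$, vanish if and only if $\w(u)\le t-1$. First I would make this precise by invoking Theorem~\ref{gpeso} with $s=2^n$: the set $\mathcal M_{s,t}$ together with the field equations $E_q[X]$ (here $q=2$) is the reduced \Gr\ basis of $\I(Q_{t-1})$, the vanishing ideal of the set $Q_{t-1}$ of all binary vectors of weight at most $t-1$. Concretely, a binary vector $u$ lies in $Q_{t-1}$, i.e.\ $\w(u)\le t-1$, precisely when every squarefree degree-$t$ monomial in the coordinates of $u$ vanishes (a product of $t$ distinct coordinates of $u$ is nonzero iff those $t$ coordinates are all $1$, which requires $\w(u)\ge t$). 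So ${\mathsf m}(u)=0$ for all ${\mathsf m}\in\mathcal M_{2^n,t}$ is equivalent to $\w(u)\le t-1$; note this direct combinatorial observation actually makes the appeal to Theorem~\ref{gpeso} optional, but it is cleanest to cite it.

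Putting the two pieces together gives both directions. For $(1)\Rightarrow(2)$: take $\bar a\in\V(J_t^n(f))$, form $u=\underline{\mathfrak g_n}(\bar a)+\ef=\underline\alpha+\ef$ as above; since $\bar a$ kills every generator ${\mathsf m}(\underline{\mathfrak g_n}+\ef)$, we get ${\mathsf m}(u)=0$ for all ${\mathsf m}\in\mathcal M_{2^n,t}$, hence $\w(u)\le t-1$, and $u$ has the required form. For $(2)\Rightarrow(1)$: given $u=\ef+\eg$ with $g\in\A_n$ and $\w(u)\le t-1$, write $g(X)=\bar a_0+\sum\bar a_i x_i$ and let $\bar a=(\bar a_0,\ldots,\bar a_n)\in\FF^{n+1}$; then $\underline{\mathfrak g_n}(\bar a)=\eg$, so $\underline{\mathfrak g_n}(\bar a)+\ef=u$ and every ${\mathsf m}(u)=0$, so $\bar a$ annihilates all the listed generators as well as $E[A]$, whence $\bar a\in\V(J_t^n(f))$ and the variety is nonempty.

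The main obstacle, such as it is, is bookkeeping rather than depth: one must be careful that ${\mathsf m}(\underline{\mathfrak g_n}+\ef)$ denotes a polynomial in $\FF[A]$ obtained by taking a product of $t$ of the $2^n$ entries of the vector $\underline{\mathfrak g_n}+\ef\in(\FF[A])^{2^n}$ (this is the module-evaluation notation ${\mathsf m}(W)$ introduced just before Section~\ref{intro_traverso}, applied with $s=2^n$ variables indexing the coordinates), and that evaluating this polynomial at $A=\bar a$ commutes with forming the product, so it equals ${\mathsf m}$ applied to the numeric vector $u=\underline{\mathfrak g_n}(\bar a)+\ef$. Once that identification is stated cleanly, the equivalence is a direct consequence of the combinatorial characterization of $Q_{t-1}$ above together with Lemma~\ref{distance}.
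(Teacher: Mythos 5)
Your proof is correct and follows essentially the route the paper intends (the paper itself omits the argument, deferring to Simonetti's work, which rests on Theorem~\ref{gpeso} exactly as you use it): identify points of $\V(J_t^n(f))$ with coefficient vectors of affine functions via $E[A]$, observe that the generators specialize to the squarefree degree-$t$ monomials evaluated at $u=\underline{\alpha}+\ef$, and characterize their common vanishing as $\w(u)\le t-1$. Your direct combinatorial justification of that last step is the right call, since it also sidesteps the off-by-one ambiguity in the paper's statement of Theorem~\ref{gpeso}.
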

%
% \begin{proof}\ \\
% (2)$\Leftrightarrow$(3). Obvious.\\
% (1)$\Rightarrow$(2). 
% Let $\bar{A}=(\bar{a}_0,\bar{a}_1,\ldots,\bar{a}_n)\in
%      \mathcal{V}(J_t^n(f))\subset\FF^{n+1}$ 
%      and let \\
% %     $u$ be the element in $\{\ef+\eg\mid g\in\A_n\}$,
% %$S_{\A_n}(f)$, 
% %then 
% $u=(\mathfrak{g}_n(\bar{A},v_1)+f(v_1),\ldots,\mathfrak{g}_n(\bar{A},v_{2^n})+f(v_{2^n})) \in \FF^{2^n}$.
% We have that ${\mathsf m}(u)=0$ for all ${\mathsf m}\in\mathcal{M}_{2^n,t}$.
% % and that $u\in\FF^{2^n}$. 
% So $u\in\mathcal{V}(I_{2^n,t})$ and, 
% thanks to Theorem \ref{gpeso}, $u\in Q_{t-1}$, i.e. ${\rm w}(u)\leq t-1$.\\
% (2)$\Rightarrow$(1). It can be proved by reversing the above argument.
% \end{proof}
%
From Lemma \ref{nf} we immediately have the following theorem.
\begin{theorem}\label{Nf}
Let $f\in\B_n$. The nonlinearity $\N(f)$ is the minimum $t$ such that $\mathcal{V}(J_{t+1}^n(f))\neq \emptyset$.
\end{theorem}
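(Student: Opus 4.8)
The statement to prove is that $\N(f)$, the nonlinearity of $f\in\B_n$, equals the minimum value of $t$ for which $\V(J_{t+1}^n(f))\neq\emptyset$. The natural strategy is to deduce this directly from Lemma \ref{nf}, which already supplies the equivalence between nonemptiness of $\V(J_t^n(f))$ and the existence of an affine function within distance $t-1$ of $f$; so the work here is purely a matter of unwinding the definition of nonlinearity as a minimum distance and matching indices carefully.

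First I would recall that $\N(f)=\min_{\alpha\in\A_n}\dist(f,\alpha)$, and set $d:=\N(f)$. By definition of the minimum, there exists $\alpha\in\A_n$ with $\dist(f,\alpha)=d$, hence with $\dist(f,\alpha)\leq d = (d+1)-1$; applying the implication $(3)\Rightarrow(1)$ of Lemma \ref{nf} with $t=d+1$ (legitimate since $1\leq d+1\leq 2^n$, as $0\leq d\leq 2^n-1$ for any Boolean function), we get $\V(J_{d+1}^n(f))\neq\emptyset$. This shows the minimal $t$ with $\V(J_{t+1}^n(f))\neq\emptyset$ is at most $d$.

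For the reverse inequality, suppose $t_0$ is any integer with $\V(J_{t_0+1}^n(f))\neq\emptyset$. Applying $(1)\Rightarrow(3)$ of Lemma \ref{nf} with $t=t_0+1$, there exists $\alpha\in\A_n$ with $\dist(f,\alpha)\leq t_0$, hence $\N(f)=\min_{\alpha\in\A_n}\dist(f,\alpha)\leq t_0$. In particular, taking $t_0$ to be the minimal such value, we obtain $d=\N(f)\leq t_0 = \min\{t \mid \V(J_{t+1}^n(f))\neq\emptyset\}$. Combining the two inequalities yields the claimed equality.

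The only mild subtlety — the ``main obstacle,'' such as it is — is bookkeeping the index shift and checking the hypothesis $1\leq t\leq 2^n$ of Lemma \ref{nf} is respected when we substitute $t=d+1$ or $t=t_0+1$; this needs the a priori bound $0\leq\N(f)\leq 2^n-1$ (any nonconstant $f$ has distance at most $2^n-1$ from some affine function, and the constant affine functions are themselves in $\A_n$). One should also note that the set $\{t : \V(J_{t+1}^n(f))\neq\emptyset\}$ is nonempty (it contains $2^n-1$, and in fact is upward closed by $(1)$–$(3)$ of Lemma \ref{nf}), so the minimum in the theorem statement is well-defined; this is immediate once the a priori bound is in hand. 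No Gröbner-basis or algebraic-geometry input beyond Lemma \ref{nf} is needed.
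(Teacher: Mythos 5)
Your proof is correct and follows the paper's route exactly: the paper simply states that the theorem follows immediately from Lemma \ref{nf}, and your argument is just a careful unwinding of that implication, with the index shift and the well-definedness of the minimum checked explicitly. Nothing further is needed.
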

From this theorem we can derive an algorithm to compute the nonlinearity for a function $f\in\B_n$, by 
determining if the variety of the ideal $J_t^n(f)$ has a solution or not.
% computing any \Gr\ basis of $J_t^n(f)$.
%
\begin{algorithm}[H]
\caption{Basic algorithm to compute the nonlinearity of a Boolean function by finding if a solution of a polynomial systems over $\FF$ exists}
\label{algNLoverF2}
  \begin{algorithmic}[1]
    \REQUIRE{a Boolean function $f$}
    \ENSURE{the nonlinearity of $f$}
    \STATE{$j \leftarrow 1$}
    \WHILE{${\mathcal V}(J_j^n(f)) =\emptyset $} 
      \STATE{$j \leftarrow j+1$} 
    \ENDWHILE
    \RETURN $j-1$
  \end{algorithmic}
\end{algorithm}
Simonetti's systems $J_j^n(f)$ are the solutions of the problem we stated at the beginning of this section: 
they use only $n+1$ variables and all we want to know from them (in the worst case) is whether they have a solution or not. 
Observe also that the solution we are interested in does not lie in some extension field but it must remain in $(\FF_2)^{n+1}$.\\
Moreover, the number of systems we need to check is, in the worst case, the maximum nonlinearity plus one.
We claim that with our constraints Simonetti's solution is, in principle, still the best-known.\\
However, a practical application of Algorithm \ref{algNLoverF2} was missing in Simonetti's work, where she would use straightforward applications of \Gr\ bases.
\begin{remark}
If $f$ is not affine, we can start our check from $J_2^n(f)$.
\end{remark}
\section{A faster algorithm for solving Simonetti's systems}
\label{traverso}
As we have seen in Section \ref{secNLwithGBoverF2}, the nonlinearity of a Boolean function can be computed 
solving polynomial systems
% using \Gr\ bases 
over $\FF$. It is sufficient to find the minimum $j$ such that the variety of the ideal $J_t^n(f)$ is not empty. Recall that
$$
J_t^n(f) = \langle\{{\mathsf m}(\underline{\mathfrak{g}_n}+\ef)\mid {\mathsf m}\in\mathcal{M}_{2^n,t}\}\cup E[A]\rangle\,.
$$
This method becomes impractical even for small values of $n$, since $\binom{2^n}{t}$ monomials have to be evaluated. 
A first slight improvement could be achieved by adding to the ideal one monomial evaluation at a time and check if 1 has appeared in the \Gr\ basis. 
Even this way, the algorithm remains very slow.\\
Of course, an actual implementation would take care to reduce modulo the temporary basis any monomial before adding it to the computation, but it would still remain too slow.
To achieve a real improvement we need to use Traverso's strategy explained in Section \ref{intro_traverso}.
In particular, our proposal is to start with a trivial monomial basis given by the all monomials in $N(E(X))$ and then
adding a monomial ${\mathsf m}(\underline{\mathfrak{g}_n}+\ef)$ at a time, computing the new Hilbert staircase and the associated algebra-multiplication matrix. 
This way we will not get at the end a \Gr\ basis, but we would see from the final Hilbert staircase whether the ideal is trivial or not.
Since only linear algebra operations are required, this method is much faster and probably is the fastest that can be used to solve Simonetti's systems.
However, in the next section we will propose an even better method, by modifying Simonetti's systems to other fields.
\begin{theorem}
 Solving the reduced Simonetti's systems $J_j^n(f)$ using Traverso's algorithm requires ${\Cal O}(n2^{2n})$ elementary operations.
\end{theorem}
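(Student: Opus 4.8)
The plan is to carry out the complexity bookkeeping of Traverso's algorithm, as recalled at the end of Section~\ref{intro_traverso}, in the particular situation of the ideals $J_t^n(f)$, exploiting the two features that make these systems special: they always contain the field equations $E[A]$, and they involve only the $n+1$ variables $a_0,\dots,a_n$. First I would fix the starting representation. Since $E[A]\subseteq J_t^n(f)$, Traverso's algorithm is initialised with the natural representation of $\langle E[A]\rangle$: the $\FF$-basis $\mathbf b$ of $\FF[A]/\langle E[A]\rangle$ consisting of the $s:=2^{n+1}$ squarefree monomials in $a_0,\dots,a_n$, together with the $k:=n+1$ multiplication matrices $A_h=M([a_h],\mathbf b)$. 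As $a_h^2=a_h$ modulo $E[A]$, multiplication by $a_h$ sends every basis monomial to a single basis monomial, so each $A_h$ is a $0$--$1$ matrix with one nonzero entry per column; this starting datum has size ${\Cal O}(n\,2^{n})$ and is produced in ${\Cal O}(n\,2^{n})$ bit operations.

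Next I would replace the $\binom{2^n}{t}$ evaluated monomials ${\mathsf m}(\underline{\mathfrak{g}_n}+\ef)$, ${\mathsf m}\in\mathcal{M}_{2^n,t}$, of Definition~\ref{defIdealF2} by a small generating set. By Theorem~\ref{gpeso}, for a vector $v\in\FF^{2^n}$ with $\w(v)=w$ one has $\sigma_j(v)=\binom{w}{j}\bmod 2$, so $\sigma_t(v)=\dots=\sigma_{2^n}(v)=0$ exactly when $\w(v)\le t-1$; hence, writing $\Sigma_j:=\sum_{{\mathsf m}\in\mathcal{M}_{2^n,j}}{\mathsf m}(\underline{\mathfrak{g}_n}+\ef)$ for the evaluation of $\sigma_j$ at $\underline{\mathfrak{g}_n}+\ef$, the ideal $\langle\{\Sigma_j:t\le j\le 2^n\}\cup E[A]\rangle$ is radical with the same variety as $J_t^n(f)$ by Lemma~\ref{nf}, so it equals $J_t^n(f)$. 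This cuts the input generating set to ${\Cal O}(2^n)$ elements. Moreover the ideals are nested, $J_{t+1}^n(f)\subseteq J_t^n(f)$ (each generator of the former is a multiple of one of the latter), and $J_t^n(f)$ is obtained from $J_{t+1}^n(f)$ by adjoining the single generator $\Sigma_t$; hence the whole loop of Algorithm~\ref{algNLoverF2} runs as one incremental Traverso computation, starting from $\langle E[A]\rangle$ and adjoining $\Sigma_{2^n},\Sigma_{2^n-1},\dots$ one at a time, stopping as soon as the current escalier becomes empty. The ${\Cal O}(2^n)$ vectors $\Sigma_j$ are produced once --- from the distances $\dist(f,\alpha)$, $\alpha\in\A_n$, via $\sigma_j(\underline{\alpha}+\ef)=\binom{\dist(f,\alpha)}{j}\bmod 2$, followed by M\"obius transforms to get their \Gr\ descriptions --- at total cost ${\Cal O}(n\,2^{2n})$.

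It remains to count and cost the \textbf{While}-loops. Each loop deletes exactly one element from the current escalier, and the escalier never grows, so over the whole incremental run at most $s=2^{n+1}={\Cal O}(2^n)$ loops occur. A single loop rewrites the removed basis element $b_\iota$ as $\sum_{j<\iota}c_jb_j$ (with $c_\iota^{-1}=1$, as we are over $\FF$) inside the ${\Cal O}(ns)$ rows of the $n+1$ multiplication matrices and inside the vectors encoding the current members of $F$. The generic estimate of Section~\ref{intro_traverso} charges ${\Cal O}(ns^2)$ to one loop, which gives only ${\Cal O}(n\,2^{3n})$ overall; the crux of the theorem is to show that here the \emph{total} rewriting work spread over all loops is ${\Cal O}(ns^2)={\Cal O}(n\,2^{2n})$. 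This is where the structure of the problem has to be used --- that the computation starts from the sparse $E[A]$-representation, that all arithmetic is over $\FF$, and that only a controlled part of the carried data (essentially the descriptions of the adjoined $\Sigma_j$) ever densifies --- so that the rewrites, together with the internally generated multiples $X_h g$, amortise down to ${\Cal O}(n\,2^{2n})$ in total. Granting this, and adding the ${\Cal O}(n\,2^{n})$ set-up cost and the ${\Cal O}(n\,2^{2n})$ cost of producing the $\Sigma_j$, one obtains the stated bound. I expect precisely this amortised per-loop (equivalently, total-rewriting) estimate to be the main obstacle; once it is established, the theorem is just the substitution $s=\Theta(2^n)$, $k=\Theta(n)$ into Traverso's bookkeeping.
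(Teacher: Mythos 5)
Your write-up reconstructs Traverso's algorithm from the ground up and, in doing so, arrives at a count of $O(2^n)$ \textbf{While}-loops each costing $O(ns^2)$ with $s=2^{n+1}$, i.e.\ $O(n2^{3n})$ in total; you then state that ``the crux of the theorem'' is an amortised bound showing the \emph{total} rewriting work over all loops is only $O(ns^2)=O(n2^{2n})$, and you explicitly write ``Granting this\dots''. That granted step is a genuine gap: you give no argument for why the substitutions of $b_\iota$ into the $n+1$ multiplication matrices and into the pending Gr\"obner descriptions should amortise, beyond an appeal to sparsity of the initial $E[A]$-representation, and nothing in your text controls how quickly the matrices densify once the first few $\Sigma_j$ are adjoined. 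As written, your argument proves only $O(n2^{3n})$.

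The paper's proof takes a completely different (and much shorter) route: it simply invokes the complexity statement recorded at the end of Section~\ref{intro_traverso}, namely that Traverso's algorithm costs $O(rs^2)$ elementary operations with $r$ the number of variables and $s$ the size of the escalier, and substitutes $r=n+1$ and the escalier bound coming from $\FF[A]/\langle E[A]\rangle$. There is no loop counting, no amortisation, and no reduction of the $\binom{2^n}{t}$ generators; the theorem is a one-line substitution into the quoted cost of the black-box algorithm. Your preliminary reduction of the generating set to the $O(2^n)$ elementary-symmetric evaluations $\Sigma_j$ (via Theorem~\ref{gpeso} and radicality from $E[A]$) is a correct and genuinely useful observation that the paper does not make, and your incremental organisation of Algorithm~\ref{algNLoverF2} across all $t$ is likewise extra content; but neither repairs the missing amortisation estimate, which is the step your proof actually needs and does not supply.
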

\begin{proof}
 As noted at the end of Section \ref{intro_traverso}, Traverso's algorithm requires ${\Cal O}(rs^2)$ elementary operations, 
 where $r$ is the number of variables of the equations in the system
 and $s$ is the number of monomials in the escalier. 
 In our case $r={n+1}$ and $s \le (2^{n+1})^2$.
\end{proof}
Actually the complexity exposed in the previous theorem is a large upper bound of the real complexity, 
since the monomials in the escalier could be much less than $(2^{n+1})^2$, 
though a precise estimation of their exact number is not known to the authors.\\
Since the nonlinearity of a Boolean function is bounded by $2^{n-1}-2^{n/2-1}$, Algorithm 1 implies we have to solve at most $2^{n-1}$ Simonetti's systems.\\
On the other hand only the last system dominates the computation since the previous ones do not have a solution.\\
If we suppose such systems were somehow given for free, 
computing the nonlinearity of the Boolean function with Algorithm \ref{algNLoverF2} 
would require ${\Cal O}(n2^{3n})$ operations.

%
%%%%%%%%%%%%%%%%%%%%%%%%%%%%%%%%%%%%%%%%%%%%%%%%%%%%%%%
% \section{Nonlinearity and polynomial systems over \texorpdfstring{$\QQ$}{Lg}}
\section{Nonlinearity and polynomial systems over $\QQ$}
  \label{secNLwithGBoverQ}
%     \section{Computing the nonlinearity using \Gr\ bases over $\QQ$}
%       \label{secNLwithGBoverQ}
%       \input{tex/secNLwithGBoverQ}
% 
% The methods presented in Section \ref{secWeightPolynomial} can be used to compute the nonlinearity of a Boolean function.\\
% 
Here we present an algorithm to compute the nonlinearity of a Boolean function 
by solving a polynomial system of equations
% using \Gr\ bases 
over $\QQ$ rather than over $\FF$, which turns out to be
%{\red or $\ZZ$??} 
much faster than Algorithm \ref{algNLoverF2}. 
The same algorithm can be slightly modified to work over the field $\FF_p$, where $p$ is a prime. 
The complexity of these algorithms will be analyzed in Section \ref{secNLComplexity}.\\
%
%%%%%%%%%%%%%%%%%%%%%%%%%%%%%%%%%%%%%%%%%%%%%%%%%%%%%%%%%%%
%
%\subsection{Two approaches using the nonlinearity polynomial}
%
% Better improvement can be done adapting Algorithm \ref{algNLoverQ} to compute the non linearity of a Boolean function $f$, using \Gr\ bases over the rational field (or a prime field). \\
For each $i=1,\ldots,2^n$, let us denote:
$$f_{i}^{(\FF)}(A)=\mathfrak{g}_n(A,\op_i)+f(\op_i)$$
the Boolean function where as usual $A = \{a_0,\dots,a_n\}$ are the $n+1$ variables representing the coefficient of a generic affine function.\\
In this case we have that:
$$(f_1^{(\FF)}(A),\dots,f_{2^n}^{(\FF)}(A)) = \underline{\mathfrak{g}_n}(A)+\ef \in (\FF[A])^{2^n}$$
Note that the polynomials $f_{i}^{(\FF)}$ are affine polynomials. %{\red(this point makes the computation fast)}.
\\
We also denote by
$$f_i^{(\ZZ)}(A) = \NNF(f_{i}^{(\FF)}(A))$$
the NNF of each $f_{i}^{(\FF)}(A)$ (obtained as in \cite{CGC-cry-art-carlet1999}, Theorem 1).
\begin{definition}\label{defNLP}
 We call $\nP_f(A) = f_1^{(\ZZ)}(A)+\dots+f_{2^n}^{(\ZZ)}(A) \in \ZZ[A]$ the {\bf integer nonlinearity polynomial} (or simply the \emph{nonlinearity polynomial}) of the Boolean function $f$.\\
 For any $t\in \NN$ we define the ideal $\NLI_f^t \subseteq \QQ[A]$ as follows:
 \begin{align}
  \NLI_f^t 
  & =
  \langle E[A] \bigcup \{ f_1^{(\ZZ)}+\dots+f_{2^n}^{(\ZZ)}-t \} \rangle = \\
  & = 
  \langle E[A] \bigcup \{ \nP_f-t \} \rangle
 \end{align}
\end{definition}
 Note that the evaluation vector $\underline{\nP_f}$ represents all the distances of $f$ from all possible affine functions (in $n$ variables).
\begin{theorem}
The variety of the ideal $\NLI_f^t$ is non-empty if and only if
the Boolean function $f$ has distance $t$ from an affine function. 
In particular, $\N(f) = t$, where $t$ is the minimum positive integer such that $\mathcal{V}(\NLI_f^t)\ne \emptyset$.
\end{theorem}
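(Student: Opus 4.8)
The plan is to show that the evaluation vector $\underline{\nP_f}$, read off at the $2^{n+1}$ points of $\FF^{n+1}$, lists \emph{exactly} the Hamming distances from $f$ to all affine functions in $\A_n$; once this is established, both assertions follow at once.

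The single ingredient that makes this work is the defining property of the NNF: for any Boolean function $g$ in $m$ variables, viewed as taking values in $\{0,1\}\subseteq\ZZ$, the polynomial $\NNF(g)$ is the unique representative, reduced modulo the field equations, that agrees with $g$ on all of $\FF^m$, so $\NNF(g)(b)=g(b)\in\{0,1\}$ for every $b\in\FF^m$. First I would apply this to each $f_i^{(\FF)}(A)=\mathfrak{g}_n(A,\op_i)+f(\op_i)$, which is a (affine, hence Boolean) function of the $n+1$ variables $A$: for every $a\in\FF^{n+1}$ one gets $f_i^{(\ZZ)}(a)=f_i^{(\FF)}(a)\in\{0,1\}$, and therefore, summing over $i$ in $\ZZ$,
\[
\nP_f(a)=\sum_{i=1}^{2^n}f_i^{(\ZZ)}(a)=\#\{\,i:\ f_i^{(\FF)}(a)=1\,\}=\w\bigl(\underline{\mathfrak{g}_n}(a)+\ef\bigr).
\]
Next I would identify $a=(a_0,\dots,a_n)$ with the affine function $\alpha_a=a_0+\sum_{i=1}^n a_i x_i$; the map $a\mapsto\alpha_a$ is a bijection $\FF^{n+1}\to\A_n$ and $\underline{\mathfrak{g}_n}(a)=\underline{\alpha_a}$, so by Lemma \ref{distance} the right-hand side above equals $\w(\underline{\alpha_a}+\ef)=\dist(f,\alpha_a)$. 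Hence $\nP_f(a)=\dist(f,\alpha_a)$ for every $a\in\FF^{n+1}$.

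Then I would compute the variety. Since $E[A]\subseteq\NLI_f^t$ and the polynomials $a_i^2-a_i$ have only $0$ and $1$ as roots even over $\overline{\QQ}$, every point of $\V(\NLI_f^t)$ lies in $\FF^{n+1}$, and the remaining generator $\nP_f-t$ then forces $\nP_f(a)=t$; combining with the previous paragraph,
\[
\V(\NLI_f^t)=\{\,a\in\FF^{n+1}:\ \dist(f,\alpha_a)=t\,\},
\]
which is non-empty precisely when $f$ lies at distance $t$ from some affine function. This proves the first claim. For the ``in particular'', set $d=\N(f)=\min_{\alpha\in\A_n}\dist(f,\alpha)$: by the first part $\V(\NLI_f^d)\neq\emptyset$, whereas $\V(\NLI_f^t)=\emptyset$ for every $t<d$, so $d$ is the least integer $t\geq 0$ with $\V(\NLI_f^t)\neq\emptyset$.

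I do not expect a genuine obstacle here: essentially all the content sits in the first step, namely that NNF evaluations over $\FF^{n+1}$ are integer-valued and reproduce the Boolean values, which turns the modulo-$2$ Hamming weight into an ordinary integer sum. The only points that deserve care are bookkeeping ones: making sure $\nP_f$ is summed in $\ZZ[A]$ rather than modulo $2$, and noting that $\V$ is taken over the algebraic closure of $\QQ$, where the equations $E[A]$ still confine the coordinates to $\{0,1\}$ (and, if one insists on a strictly positive $t$, the affine case $\N(f)=0$ must be treated separately).
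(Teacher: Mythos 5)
Your proof is correct and follows essentially the same route as the paper's: decompose $\V(\NLI_f^t)$ as the intersection of $\V(\langle E[A]\rangle)$ with the level set $\nP_f = t$, and use the defining property of the NNF to identify $\nP_f(\bar a)$ with $\dist(f,\alpha_{\bar a})$. Your version is slightly more explicit about why the field equations confine variety points to $\{0,1\}^{n+1}$ over $\overline{\QQ}$ and about the bijection $a\mapsto\alpha_a$, but the substance is identical.
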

\begin{proof}
 Note that 
 $$\NLI_f^t = \langle E[A] \rangle + \langle \{ \nP_f(A)-t \} \rangle$$
 and so 
 $$\mathcal{V}(\NLI_f^t) = \mathcal{V}(\langle E[A]\rangle) \cap \mathcal{V}(\langle \{ \nP_f(A)-t \} \rangle)\,.$$
 Therefore $\mathcal{V}(\NLI_f^t) \ne \emptyset$ if and only if 
 $\exists \bar{a}=(\bar{a}_0, \ldots, \bar{a}_n) \in \mathcal{V}(\langle E[A]\rangle)$ such that $\nP_f(\bar{a})=t$.\\
 Let $\alpha \in \mathcal{A}_n$ such that $\alpha(X) = \bar{a}_0 + \sum_{i=1}^n \bar{a}_ix_i$.\\
 By definition we have 
 $$f_i^{(\ZZ)} = 1 \iff f(\op_i) \ne \alpha(\op_i)$$ 
 and 
 $$f_i^{(\ZZ)} = 0 \iff f(\op_i) = \alpha(\op_i)\,.$$
 Hence 
 \begin{align*}
  \nP_f(\bar{a}) = \sum_{i=1}^{2^n}f_i^{(\ZZ)}(\bar{a})-t = 0 & \iff |\{i \mid f(\op_i)\ne \alpha(\op_i) \}|=t \\
                                                              & \iff \dd(f,\alpha) = t\,.
 \end{align*}
 and our claim follows directly.
\end{proof}
To compute the nonlinearity of $f$ we can use Algorithm \ref{algNLoverQ} with input $f$.\\
% $f_{i_{\FF_2}} = (\underline{\mathfrak{g}_n}+\ef)_i \in \FF_2[A]$, where $i = 1,\dots,2^n, A=a_0,\dots,a_n$, and instead of $\bar{\W_C^t}$ we use the ideal $\NLI_f^t =\langle E_2[A] \bigcup \{ f_{1_{\ZZ}}+\dots+f_{{2^n}_{\ZZ}}-t \} \rangle$.\\
%$\bar{\W_C^J} = \langle E_q[X] \bigcup \{ f_{1_{\ZZ}}+\dots+f_{n_{\ZZ}}-j \} \rangle$
% We write explicitly the modified algorithm.
%
\begin{algorithm}[H]
\caption{To compute the nonlinearity of the Boolean function $f$}
\label{algNLoverQ}
\begin{algorithmic}[1]
\REQUIRE{$f$}
\ENSURE{nonlinearity of $f$}
\STATE{Compute $\nP_f$}
% \STATE{ $f_{i_{\ZZ}} \leftarrow NNF(f_{i_{\FF_2}})$, for each $i=1,\dots,2^n$}
\STATE{$j \leftarrow 1$}
%\STATE{$\bar{\W_C^j} = \langle E_q[X] \bigcup \{ f_{1_{\ZZ}}+\dots+f_{n_{\ZZ}}-j \} \rangle$}
\WHILE{$\V(\NLI_f^j) = \emptyset$}
  \STATE{$j \leftarrow j+1$}
\ENDWHILE
\RETURN{j} 
\end{algorithmic}
\end{algorithm}
Algorithm \ref{algNLoverQ} can be modified to eliminate the while cycle. Instead of
checking if a solution of the system
% solving the system
\begin{align}\label{eq:NonlinearityPolynomialSystem_j}
 \begin{cases}
  a_0^2-a_0= 0 \\
  \ldots \\
  a_n^2-a_n = 0 \\
  \nP_C(a_0,\ldots,a_n) - j = 0
 \end{cases}
\end{align}
exists in the affine algebra $\QQ/\langle a_0^2-a_0 , \ldots, a_n^2-a_n  \rangle$ for each $j \in \{1,\ldots,2^n\}$,
% until the system has a solution
we can add the variable $t$ to the system
\begin{align}\label{eq:NonlinearityPolynomialSystem_t}
 \begin{cases}
  a_0^2-a_0= 0 \\
  \ldots \\
  a_n^2-a_n = 0 \\
  \nP_C(a_0,\ldots,a_n) - t = 0
 \end{cases}
\end{align}
and solve it in $\QQ[t]/\langle a_0^2-a_0 , \ldots, a_n^2-a_n  \rangle$, with respect to lexicographical monomial ordering, to find as a solution a polynomial $\tP(t)$, whose zeros are integers, representing the possible distances of the Boolean function $f$ from the affine functions. We are interested in the smallest solution of $\tP(t)$. \\
We did not investigate further which of the two solutions is best.
%
%%%%%%%%%%%%%%%%%%%%%%%%%%%%%%%%%%%%%%%%%%%%%%%%%%%%%%%
%
\section{An improvement using fast polynomial evaluation}
  \label{secNLwithFPE}
%     \section{Computing the nonlinearity using fast polynomial evaluation}      
%       \label{secNLwithFPE}
%       \input{tex/secNLwithFPE}
%       
Once the nonlinearity polynomial $\nP_f$ is defined, we can use another approach to compute the nonlinearity avoiding the hard task of solving a polynomial system of equations.\\ 
% computations of \Gr\ bases.\\ 
We have to find the minimum nonnegative integer $t$ in the set of the evaluations of $\nP_f$, that is, in $\{\nP_f(\bar{a}) \mid \bar{a} \in \{0,1\}^{n+1} \subset \ZZ^{n+1}\}$.\\
We write explicitly the modified algorithm.
\begin{algorithm}[H]
\caption{To compute the nonlinearity of the Boolean function $f$}
\label{algNLfromNLP}
\begin{algorithmic}[1]
\REQUIRE{$f$}
\ENSURE{nonlinearity of $f$}
\IF{$f \in \mathcal{A}_n$}
  \RETURN{$0$}
\ELSE
  \STATE{Compute $\nP_f$}
  \STATE{Compute $m = \min\{\nP_f(\bar{a}) \mid \bar{a} \in \{0,1\}^{n+1} \}$}
  % \STATE{ $f_{i_{\ZZ}} \leftarrow \rho(f_{i_{\FF_2}})$, for each $i=1,\dots,2^n$}
  % \STATE{$\nP_f = f_{1_{\ZZ}}+\dots+f_{{2^n}_{\ZZ}} $}
  \RETURN{$m$}
\ENDIF
\end{algorithmic}
\end{algorithm}

\begin{example}
 Consider the case $n=2$, $f(x_1,x_2) = x_1x_2 + 1$. We have that $\ef = (1,1,1,0)$ and $\underline{\mathfrak{g}_n}=(a_0,a_0+a_1,a_0+a_2,a_0+a_1+a_2)$.\\ 
 Let us compute all $f_i^{(\FF)}=(\underline{\mathfrak{g}_n}+\ef)_i$ and $f_i^{(\ZZ)}$,for $i=1,\ldots,2^2$:
 \begin{align*}
  f_1^{(\FF)} & = a_0 + 1          & \rightarrow f_1^{(\ZZ)} &= -a_0 + 1\\  
  f_2^{(\FF)} & = a_0 + a_1 + 1    & \rightarrow f_2^{(\ZZ)} &= 2a_0a_1 - a_0 - a_1 + 1\\  
  f_3^{(\FF)} & = a_0 + a_2 + 1    & \rightarrow f_3^{(\ZZ)} &= 2a_0a_2 - a_0 - a_2 + 1\\  
  f_4^{(\FF)} & = a_0 + a_1  + a_2 & \rightarrow f_4^{(\ZZ)} &= 4a_0a_1a_2 - 2a_0a_1 - 2a_0a_2 \\
              &                    &                         & + a_0 - 2a_1a_2 + a_1 + a_2
 \end{align*}
 Then $\nP_f = f_1^{(\ZZ)} + f_2^{(\ZZ)} + f_3^{(\ZZ)} + f_4^{(\ZZ)} = 4a_0a_1a_2 - 2a_0 - 2a_1a_2 + 3$ and since
 $$\underline{\nP_f} = (3,1,3,1,3,1,1,3)$$
 then the nonlinearity of $f$ is $1$.\\
 Observe that the vector $\underline{\nP_f}$ represents all the distances of $f$ from all possible affine functions in $2$ variables, that is, from $0,1,x_1,x_1+1,x_2,x_2+1,x_1+x_2,x_1+x_2+1$.
\end{example}

\section{Complexity considerations}
  \label{secNLComplexity}
%     \section{Complexity consideration}
%       \label{secNLComplexity}
%       \input{tex/secNLComplexity}
%
First we recall that the complexity of computing the nonlinearity of a Boolean function with $n$ variables, having as input its coefficients vector, is ${\Cal O}(n2^n)$ using the Fast M\"obius and the Fast Walsh Transform.\\
We now want to analyze the complexity of Algorithm \ref{algNLoverF2}, \ref{algNLoverQ}, \ref{algNLfromNLP}.\\
%
%%%%%%%%%%%%%%%%%%%%%%%%%%%%%%%%%%%%%%%%%%%%%%%%%%%%%%%%%%%%%%%%%%%%%%%%%%%%%%%%%%%%%%%%%%
%
The complexity of constructing the nonlinearity polynomial is claimed in an unpublished preprint as follows:
\begin{theorem}[\cite{CGC-cry-art-BellSimSala14}]
\label{thmNLPn2n}
 There exists an algorithm to compute the nonlinearity polynomial, which requires:
 \begin{enumerate}
  \item ${\Cal O}(n2^n)$ integer sums and doublings.\\
        In particular $n2^{n}$ integer sums and $n2^{n-1}$ integer doublings, i.e. the big ${\Cal O}$ constant is $c=3/2$, provided doubling costs as summing.
  \item the storage of ${\Cal O}(2^n)$ integers of size less than or equal to $2^n$.
 \end{enumerate}
\end{theorem}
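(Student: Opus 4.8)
\medskip
\noindent\textbf{Proof strategy.}
The plan is to first derive a closed form for $\nP_f$ that exhibits it, after an affine substitution in the single variable $a_0$, as the numerical normal form of the Walsh transform of $f$, and then to read off both the algorithm and its cost from that form. The starting observation is that each summand defining $\nP_f$, namely
$$
f_i^{(\FF)}(A)=a_0+\sum_{j\in\supp(\op_i)}a_j+f(\op_i)\,,
$$
is an affine Boolean function in $a_0,\dots,a_n$ --- a sum of the variables indexed by $T_i:=\{0\}\cup\supp(\op_i)$ together with the constant $f(\op_i)$ --- and that for such a function a direct computation (cf. \cite{CGC-cry-art-carlet1999}) gives the closed form
$$
\mathrm{NNF}\Bigl(c\oplus\bigoplus_{j\in T}y_j\Bigr)=\frac{1-(-1)^{c}\prod_{j\in T}(1-2y_j)}{2}\,.
$$
Summing this over $i=1,\dots,2^n$, using that $0\in T_i$ for every $i$ and that $\prod_{j\in\supp(v)}(1-2\bar a_j)=(-1)^{v\cdot\bar a}$ for every $0/1$-assignment $\bar a$ of $a_1,\dots,a_n$, I would obtain
$$
\nP_f(A)=2^{n-1}-\tfrac12(1-2a_0)\sum_{v\in\FF^n}(-1)^{f(v)}\prod_{j\in\supp(v)}(1-2a_j)\,,
$$
and, expanding the inner sum in the monomial basis $\{a^u=\prod_{j=1}^n a_j^{u_j}:u\in\FF^n\}$, that the coefficient of $a^u$ there equals
$$
\lambda_u=(-2)^{\w(u)}\sum_{v\succeq u}(-1)^{f(v)}\,;
$$
equivalently, the inner sum is exactly the NNF of the integer-valued function $\bar a\mapsto\hat F(\bar a)$ on $\FF^n$, which one can cross-check against Proposition~\ref{propNNFcoeff}.

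With the closed form in hand the algorithm is immediate. First I would compute the sign vector $\bigl((-1)^{f(\op_i)}\bigr)_{i=1,\dots,2^n}$ in $\mathcal{O}(2^n)$ operations; then obtain all the partial sums $\sum_{v\succeq u}(-1)^{f(v)}$ by a single butterfly (the up-set ``zeta'' transform) over the $n$ coordinates, at a cost of $\mathcal{O}(n2^n)$ integer additions; then multiply the $u$-th entry by $(-2)^{\w(u)}$, which amounts to $\sum_u\w(u)=n2^{n-1}$ integer doublings --- these are precisely the doublings in the statement. This produces the $2^n$ coefficients $\lambda_u$, and $\nP_f$ is assembled from the displayed formula in a further $\mathcal{O}(2^n)$ steps: the coefficient of the monomial $a_0a^u$ is $\lambda_u$, and the coefficient of the $a_0$-free monomial $a^u$ is $-\tfrac12\lambda_u$ (with $2^{n-1}$ added when $u=0$). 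The point that keeps the whole cost linear in $2^n$ times $n$, rather than roughly doubling it by running a numerical-normal-form transform in all $n+1$ variables, is that $\nP_f$ is affine in $a_0$: only length-$2^n$ transforms are ever run, and the $a_0$-part of the answer is recovered from the $a_0$-free part by a single pointwise rescaling. Organising the butterfly and the power-of-two rescalings as in \cite{CGC-cry-art-BellSimSala14} is what pins down the sharpened counts $n2^n$ sums and $n2^{n-1}$ doublings, i.e. $c=3/2$.

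For the size bound, since $\bigl|\sum_{v\succeq u}(-1)^{f(v)}\bigr|\le 2^{\,n-\w(u)}$ (the sum ranges over the $2^{\,n-\w(u)}$ vectors $v\succeq u$), one gets $|\lambda_u|\le 2^{\w(u)}\cdot2^{\,n-\w(u)}=2^n$; the coefficients of $\nP_f$ are the $\lambda_u$ and the values $-\tfrac12\lambda_u+2^{n-1}[u=0]$, all of absolute value at most $2^n$, and the halving is exact --- for $u\neq 0$ the factor $(-2)^{\w(u)}$ already supplies the required $2$, and for $u=0$ one has $-\tfrac12\lambda_0+2^{n-1}=\w(\ef)\in\ZZ$ because $\lambda_0=\hat F(0)=2^n-2\w(\ef)$. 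Hence $\mathcal{O}(2^n)$ integers of size $\le 2^n$ suffice. I expect the main obstacle to be purely bookkeeping: carrying out cleanly the summation over the $2^n$ affine summands that yields the first displayed identity, and checking that the intermediate values arising during the butterfly also stay bounded by $2^n$ (which holds, for instance, if one runs the zeta transform entirely before the power-of-two rescalings) so that the storage bound is respected throughout, not only for the output.
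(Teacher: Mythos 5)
The paper does not actually prove Theorem~\ref{thmNLPn2n}: it is imported by citation from the unpublished preprint \cite{CGC-cry-art-BellSimSala14}, so there is no in-paper argument to compare yours against. Taken on its own, your derivation is correct and self-contained. The identity $\NNF\bigl(c\oplus\bigoplus_{j\in T}y_j\bigr)=\tfrac12\bigl(1-(-1)^c\prod_{j\in T}(1-2y_j)\bigr)$ is the right closed form for affine functions (both sides are multilinear and agree on $\{0,1\}^T$, so uniqueness of the NNF applies); summing it over the $2^n$ summands with $T_i=\{0\}\cup\supp(\op_i)$ and $c_i=f(\op_i)$ does give $\nP_f=2^{n-1}-\tfrac12(1-2a_0)\sum_{v\in\FF^n}(-1)^{f(v)}\prod_{j\in\supp(v)}(1-2a_j)$, and expanding the products yields the coefficient $(-2)^{\w(u)}\sum_{v\succeq u}(-1)^{f(v)}$ of $a^u$ in the inner sum, which is indeed the NNF of the Walsh transform. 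The cost analysis is also sound: the up-set zeta transform costs $n2^{n-1}$ additions, the rescalings cost $\sum_u\w(u)=n2^{n-1}$ doublings, the assembly costs ${\Cal O}(2^n)$ exact halvings, and your bounds $\bigl|\sum_{v\succeq u}(-1)^{f(v)}\bigr|\le 2^{n-\w(u)}$, hence $|\lambda_u|\le 2^n$, together with the integrality check $-\tfrac12\lambda_0+2^{n-1}=\w(\ef)$, justify both the storage claim and the exactness of the divisions. This meets (and on the sums slightly improves) the stated counts. The one soft spot is that you attribute the sharp constants ($n2^n$ sums, $n2^{n-1}$ doublings, $c=3/2$) to organising the computation ``as in the preprint'' rather than to your own operation schedule; since those exact constants are a property of the preprint's specific algorithm and your schedule already achieves the stated ${\Cal O}$-bounds, this is a presentational looseness rather than a gap.
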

\subsection{Some considerations on Algorithm \ref{algNLoverF2}}
In Algorithm \ref{algNLoverF2}, almost all the computations are wasted evaluating all possible simple-$t$-monomials in $2^n$ variables, which are $\binom{2^n}{t}$. This number grows enormously even for small values of $n$ and $t$. We investigated experimentally how many of the $\binom{2^n}{t}$ monomials are actually needed to compute the final \Gr\ basis of $J_t^n$. Our experiment ran over all possible Boolean functions in 3 and 4 variables. The results are reported in Tables \ref{tabNumberOfMonomialsJt3},
% ( \ref{tabNumberOfMonomialsJt4}) 
\ref{tabNumberOfMonomialsJt4t1_3} and \ref{tabNumberOfMonomialsJt4t4_7}.\\
In this tables, for each $J_t^n$ there are four columns. Let $G_t^n$ be the \Gr\ basis of $J_t^n$. \\
Under the column labeled $\#$C we report the average number of \emph{checked} monomials in $2^n$ variables before obtaining $G_t^n$. \\
Under the column labeled $\#$S we report the average number of monomials which are actually \emph{sufficient} to obtain $G_t^n$. \\
Under the columns labeled ``m'' e ``M'' we report, respectively, the minimum and the maximum number of sufficient monomials to find $G_t^n$ running through all possible Boolean functions in $n$ variables.
\\
For example, to compute the \Gr\ basis of the ideal $J_2^3$ associated to a Boolean function $f$ whose nonlinearity is $2$, we needed to check on average 24 monomials before finding the correct basis. Between the $24$ monomials only $9.7$ (on average) were sufficient to obtain the same basis, where the number of sufficient monomials never exceeded the range $8-11$.
\begin{table*}[ht]
\begin{center}
% \scalebox{0.5}{
\resizebox{14cm}{!} {
\begin{tabular}{c|*3{*4{c}|}}
% p{0.5\textwidth}
    & \multicolumn{4}{c}{$J_1^3$} &   \multicolumn{4}{c}{$J_2^3$} &   \multicolumn{4}{c}{$J_3^3$} \\
 NL & $\#$S & m & M & $\#$C &     $\#$S & m & M & $\#$C &     $\#$S & m & M & $\#$C \\
\hline
  0 &   4   & 4 & 4 & 8    & 0   & 0 &  0 & 0     & 0   & 0 & 0  & 0  \\
  1 &   4.5 & 4 & 5 & 4.4  & 8.5 & 7 & 10 & 28    & 0   & 0 & 0  & 0  \\
  2 &   4.4 & 4 & 5 & 4    & 9.7 & 8 & 11 & 24    & 9.3 & 8 & 11 & 56
% &  & & & &  & & & &  & & & &  & & & &  & & & &  & & & &  & & & & \\
% results in file ???
%
% [4.0000 4.0000 4.0000 8.0000 0.0000 0.0000 0.0000 0.0000 0.0000 0.0000 0.0000 0.0000]
% [4.5000 4.0000 5.0000 4.3750 8.5000 7.0000 10.000 28.000 0.0000 0.0000 0.0000 0.0000]
% [4.4286 4.0000 5.0000 4.0000 9.7142 8.0000 11.000 24.000 9.2858 8.0000 11.000 56.000]
\end{tabular}
}%end scalebox
\end{center}
\caption{Number of monomials needed to compute the \Gr\ basis of the ideal $J_t^3$.}
\label{tabNumberOfMonomialsJt3}
\end{table*}
\begin{table*}[ht]
\begin{center}
% \scalebox{0.5}{
\resizebox{14cm}{!} {
\begin{tabular}{c|*3{*4{l}|}}%p{0.06\textwidth}
% p{0.5\textwidth}
 & \multicolumn{4}{c}{$J_1^4$} &   \multicolumn{4}{c}{$J_2^4$} &   \multicolumn{4}{c}{$J_3^4$} \\
NL & $\#$S & m & M & $\#$C &     $\#$S & m & M & $\#$C &     $\#$S & m & M & $\#$C \\
\hline
  0 &   5    & 5 & 5 & 16     &  0    & 0 & 0  & 0       & 0 & 0 & 0 & 0              \\  
  1 &   5.25 & 4 & 6 & 8      &  8.75 & 8 & 11 & 120     & 0 & 0 & 0 & 0              \\
  2 &   4.83 & 4 & 6 & 5.67   &  9.97 & 8 & 12 & 62.83   & 14.50 & 12 & 18 & 560      \\
  3 &   4.62 & 4 & 6 & 4.76   &  9.92 & 8 & 12 & 42.72   & 15.76 & 13 & 19 & 315.04   \\
  4 &   4.53 & 4 & 6 & 4.42   &  9.83 & 8 & 12 & 37.49   & 15.81 & 13 & 19 & 246.19   \\
  5 &   4.46 & 4 & 5 & 4.19   & 10.11 & 8 & 12 & 34.39   & 15.89 & 13 & 19 & 215.68   \\
  6 &   4.43 & 4 & 5 & 4.00   &  9.71 & 8 & 11 & 24.00   & 17.29 & 16 & 19 & 156.86   
\end{tabular}
}%end scalebox
\end{center}
\caption{Number of monomials needed to compute the \Gr\ basis of the ideal $J_t^4$, $t=1,2,3$.}
\label{tabNumberOfMonomialsJt4t1_3}
\end{table*}
\begin{table*}[ht]
\begin{center}
% \scalebox{0.5}{
\resizebox{14cm}{!} {
\begin{tabular}{c|*4{*4{l}|}}%p{0.06\textwidth}
% p{0.5\textwidth}
 & \multicolumn{4}{c}{$J_4^4$} &   \multicolumn{4}{c}{$J_5^4$} &   \multicolumn{4}{c}{$J_6^4$} &   \multicolumn{4}{c}{$J_7^4$}    \\
NL & $\#$S & m & M & $\#$C &     $\#$S & m & M & $\#$C &     $\#$S & m & M & $\#$C &     $\#$S & m & M & $\#$C \\
\hline
  0   & 0 & 0 & 0 & 0               & 0 & 0 & 0 & 0               & 0 & 0 & 0 & 0               & 0 & 0 & 0 & 0  \\  
  1   & 0 & 0 & 0 & 0               & 0 & 0 & 0 & 0               & 0 & 0 & 0 & 0               & 0 & 0 & 0 & 0  \\
  2   & 0 & 0 & 0 & 0               & 0 & 0 & 0 & 0               & 0 & 0 & 0 & 0               & 0 & 0 & 0 & 0  \\
  3   & 20.18 & 15 & 23 & 1820      & 0 & 0 & 0 & 0               & 0 & 0 & 0 & 0               & 0 & 0 & 0 & 0  \\
  4   & 21.44 & 16 & 24 & 1319.96   & 23.99 & 22 & 29 & 4368      & 0 & 0 & 0 & 0               & 0 & 0 & 0 & 0  \\
  5   & 21.54 & 19 & 24 & 1003.15   & 26.00 & 24 & 28 & 3851.24   & 23.50 & 22 & 25 & 8008      & 0 & 0 & 0 & 0  \\
  6   & 19.57 & 19 & 20 &  671.71   & 28    & 28 & 28 & 2603.79   & 28 & 28 & 28 & 7608.79      & 16 & 16 & 16 & 11441
\end{tabular}
}%end scalebox
\end{center}
\caption{Number of monomials needed to compute the \Gr\ basis of the ideal $J_t^4$,$t=4,5,6,7$.}
\label{tabNumberOfMonomialsJt4t4_7}
\end{table*}
\subsection{Algorithm \ref{algNLoverF2} and \ref{algNLoverQ}}
Since the ideal $J_t^n(f)$ of Definition \ref{defIdealF2} is derived from the evaluation of $\binom{2^n}{t}$ monomials (generating at most the same number of equations), then the complexity of Algorithm \ref{algNLoverF2} is equivalent to the complexity of 
solving a polynomial system
of at most $\binom{2^n}{t}$ equations of degree $d$ (where $1 < d \le t$) in $n+1$ variables over the field $\FF$. This method becomes almost impractical for $n=5$.
We recall that $t\le 2^{n-1}-2^{\frac{n}{2}-1}$ (see Equation \ref{eqMaxNL}).\\
\indent
The complexity of Algorithm \ref{algNLoverQ} is equivalent to the complexity of 
solving a polynomial system
% computing a \Gr\ basis 
of only  $n+1$ field equations plus one single polynomial $\nP_f$ of degree at most $n+1$ in $n+1$ variables over the field $\QQ$ (or over a prime field $\FF_p$) with coefficients of size less then or equal to $2^n$. \\
% Remark \ref{remGBcomplexity} implies that the overall complexity is $(n+1)^{{\Cal O}(1)}n^{{\Cal O}(n)}$. {\red add this in tables?? maybe finding the constant in ${\Cal O}(n)$}\\
% As shown in Table \ref{tabNLtimings}, 
Solving the system by computing its \Gr\ basis over a prime field $\FF_p$ with $p\sim 2^n$ is much faster than computing the same base over $\QQ$. It may be investigated if there are better size for the prime $p$, or even faster specialized algorithms to solve the system.
% Algorithm \ref{algNLoverQ} results more efficient when the nonlinearity polynomial $\nP_f$ is sparse. Unfortunately this happens when $f$ has low nonlinearity {\red some examples?}. {\red this may yield to some characterization? something like very nonlinear B.f. has ``dense'' nonlinearity polynomial...}
%
%%%%%%%%%%%%%%%%%%%%%%%%%%%%%%%%%%%%%%%%%%%%%%%%%%%%%%%%%%%%%%%%%%%%%%%%%%%%%%%%%%%%%%%%%%
%
%%%%%%%%%%%%%%%%%%%%%%%%%%%%%%%%%%%%%%%%%%%%%%%%%%%%%%%
% 
%%%%%%%%%%%%%%%%%%%%%%%%%%%%%%%%%%%%%%%%%%%%%%%%%%%%%%%
% 
\section{Acknowledgments}
  \label{secAck}
These results appear partially in the first author's PHD thesis and so he would like to thank the second author and the third author (his supervisor).\\
We presented our algorithms at a computation presentation in the conference MEGA \cite{CGC-cry-talk-bellini2015}.

%%%%%%%%%%%%%%%%%%%%%%%%%%%%%%%%%%%%%%%%%%%%%%%%%%%%%%%
%%%%%%%%%%%%%%%%%%%%%%%%%%%%%%%%%%%%%%%%%%%%%%%%%%%%%%%
%%%%%%%%%%%%%%%%%%%%%%%%%%%%%%%%%%%%%%%%%%%%%%%%%%%%%%%

% \bibliographystyle{alpha}
\bibliography{RefsCGC}

%%%%%%%%%%%%%%%%%%%%%%%%%%%%%%%%%%%%%%%%%%%%%%%%%%%%%%%

\end{document}